\newcommand{\e}{\mathrm{e}}
\definecolor{Myred}{cmyk}{0.0,1.0,1.0,0.00}
\title{  A regular analogue of the Smilansky model:
\\[.2em] spectral properties  }
\author{Diana Barseghyan \thanks{ The research has been supported by the Czech
Science Foundation (GA\v{C}R) within the project 17-01706S. D.B.
acknowledges the project SMO ``Pos\'{\i}len\'{\i}
mezin\'arodn\'{\i}ho rozm\v{e}ru v\v{e}deck\'ych aktivit na
P\v{r}\'{\i}rodov\v{e}deck\'{e} fakult\v{e} OU v Ostrav\v{e}'' No.
0924/2016/SaS. Useful comments of the referees are also gratefully
acknowledged.}
 \\ Department of Theoretical Physics, Nuclear Physics Institute CAS, \\
\phantom{c)} 25068 \v{R}e\v{z} near Prague, Czech Republic, \\[.3em]
Department of Mathematics, Faculty of Science, University \\
 \phantom{i)} of Ostrava, 30.~dubna 22, 70103 Ostrava, Czech Republic  \\ e-mail:  \emph{dianabar@ujf.cas.cz, diana.barseghyan@osu.cz} \\[2ex]
          Pavel Exner
\\ Department of Theoretical Physics, Nuclear Physics Institute CAS, \\
\phantom{c)} 25068 \v{R}e\v{z} near Prague, Czech Republic, \\[.3em]Doppler Institute for Mathematical Physics and Applied
Mathematics, \\ Czech Technical University, B\v{r}ehov\'{a} 7, 11519 Prague, Czech Republic                       \\ e-mail: \emph{exner@ujf.cas.cz}  }
\begin{document}

\maketitle
\begin{abstract}
We analyze spectral properties of the operator $H=\frac{\partial^2}{\partial x^2} -\frac{\partial^2}{\partial y^2} +\omega^2y^2-\lambda y^2V(x y)$ in $L^2(\mathbb{R}^2)$, where $\omega\ne 0$ and $V\ge 0$ is a compactly supported and sufficiently regular potential. It is known that the spectrum of $H$ depends on the one-dimensional Schr\"odinger operator $L=-\frac{\mathrm{d}^2}{\mathrm{d}x^2}+\omega^2-\lambda
V(x)$ and it changes substantially as $\inf\sigma(L)$ switches sign. We prove that in the critical case, $\inf\sigma(L)=0$, the spectrum of $H$ is purely essential and covers the interval $[0,\infty)$. In the subcritical case, $\inf\sigma(L)>0$, the essential spectrum starts from $\omega$ and there is a non-void discrete spectrum in the interval $[0,\omega)$. We also derive a bound on the corresponding eigenvalue moments.\end{abstract}

\noindent
{\bf Keywords:} Discrete spectrum, eigenvalue estimates, Smilansky model, spectral transition.

\section{Introduction} \setcounter{equation}{0}

In the paper \cite{Sm04} Uzy Smilansky introduced a simple example
of quantum dynamics which behaves in two substantially different
ways depending on the value of the coupling constant; the original
motivation was to demonstrate that some commonly accepted assumption
in describing irreversible dynamics via coupling to a heat bath can
be avoided. In PDE terms the model is described by the Hamiltonian

\begin{equation}
\label{HSmil}
H_\mathrm{Sm}=-\frac{\partial^2}{\partial x^2}
+\frac12\left( -\frac{\partial^2}{\partial y^2}+y^2 \right) +\lambda
y\delta(x)
\end{equation}
in $L^2(\mathbb{R}^2)$ and the two dynamics types can be expressed
in spectra terms: for $\lambda\le\sqrt{2}$ the operator
\eqref{HSmil} is bounded from below, while for $\lambda>\sqrt{2}$
its spectrum fills the real line \cite{So04}; note that the model
has a mirror symmetry which allows us to consider the situation with $\lambda\ge 0$
only.

The model was subsequently generalized in various way, in
particular, to situations when more than one singular `escape
channel' is open \cite{ES05, NS06}. Other modifications concerned
replacing the oscillator by a potential well of a different shape
\cite{So06b} or by replacing the line by a more general graph
\cite{So06a}. It is also possible to have the motion in the $x$
direction restricted to an interval with periodic boundary
conditions \cite{Gu11, RS07}. In the first named of these papers
time evolution of wave packets was investigated to confirm the idea
that the spectral change in the supercritical case corresponds to
the possibility of an `escape to infinity'; the model was then in
\cite{Gu11} interpreted as a caricature description of a quantum
measurement.

Another question inspired by this work was whether Smilansky model
has an analogue in which the $\delta$-interaction with $y$-dependent
strength is replaced by a smooth potential channel $U$ of increasing
depth. One way to do it is to approximate the $\delta$-interaction
in \eqref{HSmil} by a family of shrinking potentials in the usual
way \cite[Sec.~I.3.2]{AGHH05}. Since the mechanism behind the abrupt
spectral transition is the competition between the
$\delta$-potential eigenvalue $-\frac14 \lambda^2y^2$ and the
oscillator potential, and since the said approximation is based on the assumption that the integral of the potential is preserved at squeezing, we have to match the integral of the approximating potential with the $\delta$ coupling constant, $\int
U(x,y)\,\mathrm{d}x \sim y$, which can be achieved, e.g., by
choosing $U(x,y)=\lambda y^2V(xy)$ for a fixed function $V$. This
suggests the partial differential operator on $L^2(\mathbb{R}^2)$
acting as
\begin{equation}
\label{H1} H=-\frac{\partial^2}{\partial
x^2}-\frac{\partial^2}{\partial y^2} +\omega^2y^2-\lambda y^2V(x y)
\end{equation}
as a suitable candidate for such a regular counterpart to the
operator \eqref{HSmil}. Here $\omega,\, \lambda$ are positive
constants and the potential $V$ with
$\mathrm{supp}\,V\subset[-a,a],\:a>0$, is a nonnegative function
with bounded first derivative; under these assumptions the operator
\eqref{H1} is by Faris-Lavine theorem \cite[Thms.~X.28 and X.38]{RS}
essentially self-adjoint on $C_0^\infty (\mathbb{R}^2)$. Hence it
has a unique self-adjoint extension, namely its closure, which we
will for simplicity also denote by $H$.

In \cite{BE14} we investigated such a model and demonstrated that it
also exhibits an abrupt spectral transition when the coupling
parameter exceeds a critical value. To describe it we need to
introduce a one-dimensional comparison operator,
\begin{equation}\label{comparison}
L=-\frac{\mathrm{d}^2}{\mathrm{d}x^2}+\omega^2-\lambda
V(x)\end{equation}
on $L^2(\mathbb{R})$ with the domain $\mathcal{H}^2(\mathbb{R})$. This allowed
us to characterize different spectral regimes: the operator $H$ is
bounded from below provided that $L$ is non-negative, while if the
spectral threshold of $L$ is negative the spectrum of $H$ fills the
whole real line; for the sake of brevity we shall refer to these
cases as to \emph{(sub)critical} and \emph{supercritical},
respectively.

To make it clear how the operator $L$ arises, consider for a fixed $y$ the corresponding part of (\ref{H1}), namely
\begin{equation}\label{appearance}-\frac{\mathrm{d}^2}{\mathrm{d}
x^2}+\omega^2y^2-\lambda y^2V(x y)\quad\text{on}\;\; \mathbb{R}\,,
\end{equation}
and writing $u(x, y)=\sqrt{|y|} f (z |y|, y)$ we pass from (\ref{appearance}) to the unitarily equivalent operator
$$
y^2 \left(-\frac{\mathrm{d}^2}{\mathrm{d}
z^2}+\omega^2-\lambda V(\pm z)\right)
$$
with (\ref{comparison}) related to the expression inside the bracket.

To be exact, in \cite{BE14} the last term on the right-hand side of
\eqref{H1} was modified by a cut-off factor introduced from
technical reasons which had no influence on the described behavior.
The aim of the present paper is to extend and deepen the analysis of
this operator class in several directions:
\begin{itemize}
\item to analyze the critical case, $\inf\sigma(L)=0$, in
particular, to show that $\sigma(H)=[0,\infty)$ holds in this case
\item in the subcritical case, to show that $\sigma_\mathrm{ess}(H)=
[\omega,\infty)$ and the discrete spectrum is nonempty
\item also in the subcritical case, to derive a bound to eigenvalue
momenta
\end{itemize}
These three topics will be subsequently dealt with in
Sections~\ref{s:crit}--\ref{s:moment} below. Before proceeding to
that, let us mention that there are other systems with narrowing
potential channels which exhibit similar spectral transitions. To
our knowledge, the effect was first noted by M.~Znojil \cite{Zn98}.
Another recent example concerns the operator $-\Delta + |xy|^p -
\lambda (x^2+y^2)^{p/(p+2)}$ in $L^2(\mathbb{R}^2)$ with a fixed
$p\ge 1$ discussed recently in \cite{BEKT16}, where the spectrum
changes from purely discrete to the whole real line at the critical
value of $\lambda$ equal to the principal eigenvalue of the
corresponding one-dimensional anharmonic oscillator.

\section{The critical regime} \label{s:crit}
\subsection{Essential spectrum}\setcounter{equation}{0}

In the free case, $\lambda=0$, the spectrum is purely essential and
equal to $[\omega,\infty)$. We show first that no part of it is lost
when the critical perturbation is switched on, and on the contrary,
the essential spectrum now includes the whole non-negative
half-line.

\begin{theorem} {Theorem}\label{th:critess}
Under the stated assumptions, the essential spectrum of operator $H$
given by (\ref{H1}) contains the half-line $[0, \infty)$ if $\inf
\sigma(L)=0$.
\end{theorem}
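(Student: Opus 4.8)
The plan is to construct a Weyl sequence for each $\mu \in [0,\infty)$, showing $\mu \in \sigma_{\mathrm{ess}}(H)$. The starting point is the separation-of-variables heuristic that underlies the whole model: for large $|y|$, the $x$-part of $H$ behaves like $y^2 L$ (after the rescaling $x \mapsto x/|y|$ indicated in the excerpt), and since $\inf\sigma(L)=0$ there exist approximate zero modes of $L$ at arbitrarily low energy cost. Concretely, because $0 = \inf\sigma(L)$, one can pick a sequence $\psi_n \in C_0^\infty(\mathbb{R})$ with $\|\psi_n\|=1$ and $\|L\psi_n\| \to 0$ (if $0$ is an eigenvalue this is trivial; otherwise use a Weyl sequence for $L$ and truncate it smoothly, controlling the commutator terms coming from the cut-off). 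I would then form trial functions of product type
\begin{equation}
\Phi_{n}(x,y) = \sqrt{|y|}\,\psi_n(xy)\,\chi_n(y)\,\e^{\i k y},
\end{equation}
where $k=k(\mu)$ is chosen so that the $y$-momentum contributes the needed energy $\mu$, and $\chi_n$ is a smooth bump supported on a window $[R_n, R_n + \ell_n]$ in the $y$ variable with $R_n \to \infty$; the prefactor $\sqrt{|y|}$ is the one making the $x$-rescaling unitary. The supports in $y$ can be taken mutually disjoint, which gives the weak convergence to zero needed for a singular Weyl sequence and hence membership in the essential spectrum.

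The key steps, in order: (1) fix $\psi_n$ with $\|L\psi_n\|\to 0$ and a compact $x$-support; (2) apply $H$ to $\Phi_n$ and expand, using $-\partial_x^2 + \omega^2 y^2 - \lambda y^2 V(xy)$ acting on $\sqrt{|y|}\psi_n(xy)$ to produce, via the unitary rescaling, the term $y^2 (L\psi_n)$ evaluated at the rescaled argument, which has $L^2$-norm $O(R_n^{?}\|L\psi_n\|)$ after accounting for the $\sqrt{|y|}$ Jacobian — so the window position $R_n$ and the rate $\|L\psi_n\|\to 0$ must be balanced, e.g. choosing $R_n$ to grow slowly enough; (3) handle the $-\partial_y^2$ derivatives: differentiating $\sqrt{|y|}\psi_n(xy)$ in $y$ brings down factors of $x$ and $y$-powers, but on the support $|x|\le a/|y|$ (from $\mathrm{supp}\,V$ and $\mathrm{supp}\,\psi_n$) these are controlled; the dominant surviving $y$-derivative term is $k^2 \Phi_n$ plus lower order, which supplies the spectral parameter; (4) collect all error terms and show $\|(H-\mu)\Phi_n\|/\|\Phi_n\| \to 0$ for a suitable coordinated choice of $R_n, \ell_n$ and the rate in (1).

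The main obstacle I anticipate is step (3) together with the balancing in step (2): the trial function is genuinely two-dimensional and not a clean tensor product, so the $-\partial_y^2$ operator does not act nicely on the rescaled variable $z=xy$. One must carefully track how $\partial_y$ interacts with both the $\sqrt{|y|}$ factor and the $y$-dependence hidden inside $\psi_n(xy)$, and verify that all the resulting cross terms are of lower order in the window width $\ell_n$ and in $R_n^{-1}$ relative to the $\Phi_n$-norm. A secondary subtlety is that $\|L\psi_n\|\to 0$ only slowly in general, so the growth rate of $R_n$ cannot be prescribed independently; the argument needs the two limits to be taken in the right order, which forces a diagonal-sequence extraction at the end. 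Once these estimates are in place, disjointness of the $y$-supports immediately yields orthogonality and the conclusion $[0,\infty)\subset\sigma_{\mathrm{ess}}(H)$.
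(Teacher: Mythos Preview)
Your strategy is the same as the paper's---a Weyl sequence built from rescaled approximate ground states of $L$, localized far out in $y$---but you are working harder than necessary. Under the standing hypotheses ($V\ge 0$, compactly supported, not identically zero), the one-dimensional operator $-\mathrm{d}^2/\mathrm{d}x^2-\lambda V$ always has a negative eigenvalue (Simon's weak-coupling result), and since $\sigma_{\mathrm{ess}}(L)=[\omega^2,\infty)$, the condition $\inf\sigma(L)=0$ forces $0$ to be an \emph{isolated eigenvalue} of $L$ with an exponentially decaying eigenfunction $h$. Your ``otherwise use a Weyl sequence'' branch never arises; taking $\psi_n\equiv h$ with $Lh=0$ exactly, both the balancing in step~(2) and the diagonal extraction you flag as the secondary subtlety disappear. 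The paper also omits the $\sqrt{|y|}$ prefactor: the simpler ansatz $h(xy)\,\e^{i\sqrt{\mu}y}\chi(y/n)$ with $\mathrm{supp}\,\chi\subset[1,2]$ has norm bounded below by $1/\sqrt{2}$, and the cross terms from $\partial_y^2$ are then explicitly $\mathcal{O}(n^{-2})$ or smaller because all moments $\int t^k|h^{(j)}(t)|^2\,\mathrm{d}t$ are finite by exponential decay of $h$. One small slip: your support claim $|x|\le a/|y|$ conflates $\mathrm{supp}\,V$ with the support of the trial function; the eigenfunction $h$ is supported on all of $\mathbb{R}$, and it is its exponential decay, not compact support, that controls those terms.
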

\begin{proof}
To prove that any non-negative number $\mu$ belongs to the essential
spectrum of $H$ we are going to employ Weyl's criterion
\cite[Thm.~VII.12]{RS}: we have to find a sequence
$\{\psi_n\}_{n=1}^\infty\subset D(H)$ of unit vectors,
$\|\psi_n\|=1$, which converges weakly to zero and
$$ 
\|H\psi_n-\mu\psi_n\|\to 0 \qquad\text{as}\quad n\to\infty
$$
holds. We are going to use the fact that for any non-negative
potential $V$ which is not identically zero the operator

$$
\widetilde{L}=-\frac{\mathrm{d}^2}{\mathrm{d}x^2}-\lambda V(x)
$$
on $L^2(\mathbb{R})$ has at least one negative eigenvalue
\cite{Si76}, hence the spectral threshold of $L$ is an isolated
eigenvalue; we denote the corresponding normalized eigenfunction by
$h$.

Given a smooth function $\chi$ with $\mathrm{supp}\,\chi\subset[1,
2]$ and satisfying $\int_1^2\chi^2(z)\,\mathrm{d}z=1$, we define
\begin{equation}
\label{sequence}
\psi_n(x,y):=h(x y)\,\e^{i \sqrt{\mu} y}\chi\left(\frac{y}{n}\right)\,,
\end{equation}
where $n\in\mathbb{N}$ is a positive integer to be chosen later. For
the moment we just note that choosing $n$ large enough one can
achieve that $\|\psi_n\|_{L^2(\mathbb{R}^2)}\ge\frac{1}{\sqrt{2}}$ as the
following estimates show,
\begin{eqnarray}
\lefteqn{\int_{\mathbb{R}^2}\left|h(xy)\,\e^{i\sqrt{\mu}
y}\,\chi\left(\frac{y}{n}\right)\right|^2\,
\mathrm{d}x\,\mathrm{d}y= \int_{n}^{2n}\int_{\mathbb{R}}
\left|h(xy)\,\chi\left(\frac{y}{n}\right)
\right|^2\,\mathrm{d}x\,\mathrm{d}y} \nonumber \\ && =
\int_{n}^{2n}\int_{\mathbb{R}}\frac{1}{y}\left|h(t)\,\chi\left(\frac{y}{n}\right)
\right|^2\,\mathrm{d}t\,\mathrm{d}y
=\int_{\mathbb{R}}|h(t)|^2\,\mathrm{d}t\,
\int_{n}^{2n}\frac{1}{y}\left|\chi\left(\frac{y}{n}\right)\right|^2\,\mathrm{d}y
\nonumber \\ && \label{firstpart}
=\int_{n}^{2n}\frac{1}{y}\left|\chi\left(\frac{y}{n}\right)
\right|^2\,\mathrm{d}y=\int_1^2\frac{1}{z}\left|\chi(z)
\right|^2\,\mathrm{d}z\ge\frac{1}{2}\,.
\end{eqnarray}
Our next aim is to show that $\|H\psi_n -\mu\psi_n
\|_{L^2(\mathbb{R}^2)}^2<\varepsilon$ holds for a suitably chosen
$n=n(\varepsilon)$. By a straightforward calculation one finds
$$
\frac{\partial^2\psi_n}{\partial x^2}
=y^2h''(x y)\,\e^{i \sqrt{\mu} y}\chi\left(\frac{y}{n}
\right)
$$
and
\begin{eqnarray}\nonumber
\lefteqn{\frac{\partial^2\psi_n}{\partial y^2} =x^2h''(x y)\,\e^{i
\sqrt{\mu} y}\chi\left(\frac{y}{n}\right) +2ix\sqrt{\mu}\, h'(x
y)\,\e^{i \sqrt{\mu} y}\chi\left(\frac{y}{n}\right)
+\frac{2x}{n}h'(x y)\,\e^{i \sqrt{\mu}
y}\chi'\left(\frac{y}{n}\right)} \\ &&-\mu h(x y)\,\e^{i \sqrt{\mu}
y}\chi\left(\frac{y}{n}\right) +2\frac{i \sqrt{\mu}}{n}h(x y)\,\e^{i
\sqrt{\mu} y}\chi'\left(\frac{y}{n}\right) +\frac{1}{n^2}h(x
y)\,\e^{i \sqrt{\mu} y}\chi''\left(\frac{y}{n}\right)\,.
\label{calculations}
\end{eqnarray}
We need to show that choosing $n$ sufficiently large one can make
the terms on the right-hand side of (\ref{calculations}) as small as
we wish. Changing the integration variables, we get the following
estimate,
\begin{eqnarray*}
\lefteqn{\int_{\mathbb{R}^2}\left|x^2\,h''(x y)\,\e^{i \sqrt{\mu}
y}\chi\left(\frac{y}{n}
\right)\right|^2\,\mathrm{d}x\,\mathrm{d}y=\int_{n}^{2n}\int_{\mathbb{R}}
\left|x^2\,h''(x y)\, \chi\left(\frac{y}{n}\right)
\right|^2\,\mathrm{d}x\,\mathrm{d}y} \\ &&
=\int_{n}^{2n}\frac{1}{y^5}
\left|\chi\left(\frac{y}{n}\right)\right|^2\,\mathrm{d}y
\,\int_{\mathbb{R}}t^4|h''(t)|^2\,\mathrm{d}t \le\frac{1}{n^4}
\int_1^2|\chi(z)|^2\mathrm{d}z\,\int_{\mathbb{R}}t^4|h''(t)|^2\,\mathrm{d}t\,;
\end{eqnarray*}
note that since the potential $V$ has by assumption a compact
support, the ground state eigenfunction $h$ decays exponentially as
$|x|\to\infty$, hence the second integral in the last expression
converges. In the same way we establish the remaining inequalities
we need,
\begin{eqnarray*}
\int_{\mathbb{R}^2}\left|x h'(x y)\,\e^{i \sqrt{\mu}
y}\chi\left(\frac{y}{n}\right)
\right|^2\,\mathrm{d}x\,\mathrm{d}y\\=\int_{n}^{2n}
\int_{\mathbb{R}}\frac{1}{y^3}\left|t
h(t)\,\chi\left(\frac{y}{n}\right)
\right|^2\,\mathrm{d}t\,\mathrm{d}y
\le\frac{1}{n^2}\int_1^2|\chi(z)|^2
\mathrm{d}z\,\int_{\mathbb{R}}t^2|h(t)|^2\,\mathrm{d}t\,,
\end{eqnarray*}
\begin{eqnarray*}
\int_{\mathbb{R}^2}\left|\frac{x}{n}\,h'(x y)\,\e^{i
\sqrt{\mu} y}\chi'\left(\frac{y}{n}\right)
\right|^2\,\mathrm{d}x\,\mathrm{d}y\\=\int_{n}^{2n}\int_{\mathbb{R}}
\frac{1}{y}\left|\frac{t}{n
y}h'(t)\,\chi'\left(\frac{y}{n}\right)\right|^2
\,\mathrm{d}t\,\mathrm{d}y\le\frac{1}{n^4}
\int_1^2|\chi'(z)|^2\mathrm{d}z\,\int_{\mathbb{R}}t^2|h'(t)|^2\,\mathrm{d}t\,,
\end{eqnarray*}
\begin{eqnarray*}
\int_{\mathbb{R}^2}\left|\frac{1}{n}\,h(x y)\,\e^{i
\sqrt{\mu} y}\chi'\left(\frac{y}{n}\right)
\right|^2\,\mathrm{d}x\,\mathrm{d}y\\=\frac{1}{n^2}\int_{n}^{2n}\int_{\mathbb{R}}
\frac{1}{y}\left|h(t)\,\chi'\left(\frac{y}{n}\right)\right|^2
\,\mathrm{d}t\,\mathrm{d}y \le\frac{1}{n^2}
\int_1^2\frac{|\chi'(z)|^2}{z}\mathrm{d}z\,\int_{\mathbb{R}}|h(t)|^2\,\mathrm{d}t\,,
\end{eqnarray*}
\begin{eqnarray*}
\int_{\mathbb{R}^2}\left|\frac{1}{n^2}h(x y)\,\e^{i
\sqrt{\mu} y}\chi''\left(\frac{y}{n}
\right)\right|^2\,\mathrm{d}x\,\mathrm{d}y\\=\frac{1}{n^4}\int_{n}^{2n}
\int_{\mathbb{R}}\frac{1}{y}\left|h(t)\,\chi''\left(\frac{y}{n}\right)
\right|^2\,\mathrm{d}t\,\mathrm{d}y
\le\frac{1}{n^4}\int_1^2|\chi''(z)|^2\mathrm{d}z
\,\int_{\mathbb{R}}|h(t)|^2\,\mathrm{d}t\,,
\end{eqnarray*}
which show that the corresponding terms are either
$\mathcal{O}(n^{-2})$ or $\mathcal{O}(n^{-4})$ as $n\to\infty$, hence
choosing $n$ large enough we can achieve that the sum of all the
integrals at the left-hand sides of the above inequalities is less
than $\varepsilon$. This allows us to estimate the expression in
question,
\begin{eqnarray*}
\lefteqn{\int_{\mathbb{R}^2}|H\psi_n-\mu
\psi_n|^2(x,y)\,\mathrm{d}x\,\mathrm{d}y} \\ &&
=\int_{\mathbb{R}^2}\left|-\frac{\partial^2\psi_n}{\partial x^2}-
\frac{\partial^2\psi_n}{\partial y^2}+\omega^2y^2\psi_n-\lambda
y^2V(x y)\psi_n-\mu \psi_n\right|^2\, \mathrm{d}x\,\mathrm{d}y \\
&& \le\int_{n}^{2n} \int_{\mathbb{R}}\biggl|y^2h''(x
y)\chi\left(\frac{y} {n}\right)-\omega^2\,y^2h(x
y)\chi\left(\frac{y}{n}\right) +\lambda y^2\,V(x y)h(x
y)\chi\left(\frac{y}{n}\right)\biggr|^2\,\mathrm{d}x\,
\mathrm{d}y+\varepsilon \\ &&
=\int_{n}^{2n}\int_{\mathbb{R}}\biggl|y^2\left(h''(xy)-\omega^2h(xy)+\lambda
V(xy)h(xy)\right)\chi\left(\frac{y}{n}\right)\biggr|^2
\,\mathrm{d}x\,\mathrm{d}y+\varepsilon\,,
\end{eqnarray*}
and using the fact that $Lh=0$ holds by assumption, the last
inequality implies
\begin{equation}\label{fin.}
\int_{\mathbb{R}^2}|H\psi_n-\mu\psi_n|^2(x,y)\,
\mathrm{d}x\,\mathrm{d}y<\varepsilon\,.
\end{equation}
To complete the proof we fix a sequence
$\{\varepsilon_j\}_{j=1}^\infty$ such that $\varepsilon_j\searrow0$
holds as $j\to\infty$, and to any $j$ we construct a function
$\psi_{n(\varepsilon_j)}$ according to \eqref{sequence} with the
parameters chosen in such a way that $n(\varepsilon_j) >2
n(\varepsilon_{j-1})$. The norms of $H\psi_{n(\varepsilon_j)}$
satisfy inequality (\ref{fin.}) with $\varepsilon_j$ on the
right-hand side, and since the supports of
$\psi_{n(\varepsilon_j)},\:j =1,2,\ldots,$ do not intersect each
other by construction, the sequence of these functions converges
weakly to zero. The same is true for the sequence of unit vectors
$\tilde\psi_{n(\varepsilon_j)} :=
\frac{\psi_{n(\varepsilon_j)}}{\|\psi_{n(\varepsilon_j)}\|}$ and the
norms of $H\tilde\psi_{n(\varepsilon_j)}$ satisfy an inequality
similar to (\ref{fin.}), this time with $2\varepsilon_j$ on the
right-hand side; this yields the sought claim.
\end{proof}
\begin{remark} {Remark}
Note that he sequence $\psi_{n(\varepsilon_j)}$ can be ``less
sparse'' then we have assumed in the proof. Indeed, these functions
tend to zero uniformly on compact sets due to the factor
$\frac{1}{\sqrt{k}}$ in their definition, hence for any compactly
supported test function $\phi$ one has
$\int_{\mathbb{R}^2}\psi_{n(\varepsilon_j)}
\overline{\phi}\,\mathrm{d}x\,\mathrm{d}y\to0$ as $k\to\infty$. As
compactly supported functions are dense in $L^2(\mathbb{R}^2)$, this
is sufficient to ensure the weak convergence.
 \end{remark}

\subsection{Non-negativeness}

Now we are going to show that under our assumptions the operator $H$
has no negative spectrum in the critical regime.

\begin{theorem} {Theorem}\label{th:nonneg}
Let $\inf\sigma(L)=0$, then $H$ is non-negative.
\end{theorem}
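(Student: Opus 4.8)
The plan is to establish the non-negativity of $H$ by reducing it, fibre by fibre in the variable $y$, to the non-negativity of the one-dimensional comparison operator $L$. Since $H$ is essentially self-adjoint on $C_0^\infty(\mathbb{R}^2)$, it suffices to show that $\langle H\psi,\psi\rangle\ge0$ for every $\psi\in C_0^\infty(\mathbb{R}^2)$; this inequality then extends to all of $D(H)$ by approximation in the graph norm, which is the same as saying $\sigma(H)\subset[0,\infty)$.

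The first step is to integrate by parts, which gives for $\psi\in C_0^\infty(\mathbb{R}^2)$
$$
\langle H\psi,\psi\rangle=\int_{\mathbb{R}^2}\Bigl(|\partial_x\psi|^2+|\partial_y\psi|^2+\omega^2y^2|\psi|^2-\lambda y^2V(xy)|\psi|^2\Bigr)\,\mathrm{d}x\,\mathrm{d}y\,.
$$
The term $|\partial_y\psi|^2$ is non-negative and may be dropped, so it is enough to prove that for almost every fixed $y$ the integral over $x$ of the remaining three terms is non-negative.

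The second step is the rescaling $z=x|y|$ already employed above: for fixed $y\ne0$ put $\phi_y(z):=\psi(z/|y|,y)\in C_0^\infty(\mathbb{R})$. A direct change of variables yields
$$
\int_{\mathbb{R}}\Bigl(|\partial_x\psi(x,y)|^2+\omega^2y^2|\psi(x,y)|^2-\lambda y^2V(xy)|\psi(x,y)|^2\Bigr)\,\mathrm{d}x=|y|\int_{\mathbb{R}}\Bigl(|\phi_y'|^2+\omega^2|\phi_y|^2-\lambda V(\pm z)|\phi_y|^2\Bigr)\,\mathrm{d}z\,,
$$
where the sign of the argument of $V$ equals $\mathrm{sgn}\,y$. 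The right-hand side is $|y|$ times the quadratic form, evaluated at $\phi_y$, of $L$ for $y>0$, respectively of the operator $-\mathrm{d}^2/\mathrm{d}z^2+\omega^2-\lambda V(-z)$ for $y<0$; the latter is unitarily equivalent to $L$ via the parity map and hence has the same spectral threshold. Since $V$ is bounded, the form domain of $L$ is $\mathcal{H}^1(\mathbb{R})$, and the hypothesis $\inf\sigma(L)=0$ states precisely that this form is non-negative there. Each fibre therefore contributes a non-negative amount, and integrating over $y$ gives $\langle H\psi,\psi\rangle\ge0$.

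I do not expect a genuine obstacle here: once the rescaling is in place the computation is elementary. The only points that need a little care are the bookkeeping of the sign of $y$ inside the potential — handled by the parity symmetry noted above — and the routine passage from the operator core $C_0^\infty(\mathbb{R}^2)$ to $D(H)$. Conceptually the statement simply reflects the fact that all the fibre operators $y^2L$ are non-negative, so that no negative spectrum can be produced.
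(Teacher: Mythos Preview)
Your argument is correct and follows essentially the same route as the paper: drop the non-negative $|\partial_y\psi|^2$ term from the quadratic form and rescale the $x$-integral at each fixed $y\ne0$ to reduce matters to the non-negativity of the form of $L$. The only cosmetic difference is that the paper changes variables via $t=xy$ (so $V(xy)$ becomes $V(t)$ directly, without a parity argument), whereas you use $z=x|y|$ and then invoke parity to handle $V(\pm z)$; both are equivalent.
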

\begin{proof}
For any $u\in \mathrm{dom}(Q_H)$, the quadratic form associated with
$H$, we have
$$
Q_H[u]=\int_{\mathbb{R}^2}\left|\frac{\partial{u}}{\partial{x}}\right|^2\,
\mathrm{d}x\,\mathrm{d}y+\int_{\mathbb{R}^2}\left|
\frac{\partial{u}}{\partial{y}}\right|^2\,\mathrm{d}x\,\mathrm{d}y+\omega^2\int_{\mathbb{R}^2}y^2
|u|^2\,\mathrm{d}x\,\mathrm{d}y-\lambda \int_{\mathbb{R}^2}y^2 V(x
y)|u|^2\,\mathrm{d}x\,\mathrm{d}y\,.
$$
Neglecting the second term on the right-hand side, we can estimate
the form value as
\begin{equation}\label{q}
Q_H[u] \ge
\int_{\mathbb{R}}\left(\int_{\mathbb{R}}\left|\frac{\partial{u}}{\partial{x}}\right|^2\,
\mathrm{d}x+\omega^2\int_{\mathbb{R}}y^2
|u|^2\,\mathrm{d}x-\lambda \int_{\mathbb{R}}y^2 V(x y)
|u|^2\,\mathrm{d}x\right)\,\mathrm{d}y\,.
\end{equation}
For any fixed $y\neq0$ we change variables in the inner integral on
the right-hand side and denote $w(t,y)=u\left(\frac{t}{y},
y\right)$. Using the fact that $L\ge 0$ one finds
\begin{eqnarray*}
\lefteqn{\int_{\mathbb{R}}\left|\frac{\partial{u}}{\partial{x}}\right|^2(x,
y)\,\mathrm{d}y+\omega^2\, y^2\int_{\mathbb{R}} |u|^2(x,
y)\,\mathrm{d}x-\lambda y^2\int_{\mathbb{R}} V(x y) |u|(x,
y)^2\,\mathrm{d}x}
\\  && =\frac{1}{|y|}\left(y^2\int_{\mathbb{R}}\left|\frac{\partial{w}}{\partial{t}}\right|^2(t,
y)\,\mathrm{d}t+\omega^2\,y^2 \int_{\mathbb{R}} |w|^2(t,
y)\,\mathrm{d}t-\lambda y^2\int_{\mathbb{R}}V(t) |w|^2(t,
y)\,\mathrm{d}t\right)
\\  && =|y|\left(\int_{\mathbb{R}}\left|\frac{\partial{w}}{\partial{t}}\right|^2(t,
y)\,\mathrm{d}t+\omega^2\,\int_{\mathbb{R}} |w|^2(t,
y)\,\mathrm{d}t-\lambda \int_{\mathbb{R}}V(t) |w|^2(t,
y)\,\mathrm{d}t\right)\ge 0\,,
\end{eqnarray*}
which in combination with the inequality (\ref{q}) establishes our
claim.
\end{proof}

\begin{corollary}{Corollary}
In the critical case we have $\sigma(H)= \sigma_\mathrm{ess}(H)
=[0,\infty)$.
\end{corollary}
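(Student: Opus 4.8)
The plan is to combine the two theorems just established; no new machinery is needed. By Theorem~\ref{th:nonneg} the operator $H$ is non-negative in the critical case, which means $\sigma(H)\subseteq[0,\infty)$, and \emph{a fortiori} $\sigma_\mathrm{ess}(H)\subseteq[0,\infty)$ since the essential spectrum is always a subset of the spectrum. On the other hand, Theorem~\ref{th:critess} gives the reverse inclusion $[0,\infty)\subseteq\sigma_\mathrm{ess}(H)$. Chaining these,
\begin{equation*}
[0,\infty)\subseteq\sigma_\mathrm{ess}(H)\subseteq\sigma(H)\subseteq[0,\infty)\,,
\end{equation*}
and all four sets must coincide, which is exactly the asserted identity $\sigma(H)=\sigma_\mathrm{ess}(H)=[0,\infty)$.

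There is essentially no obstacle here; the only thing worth remarking is that the two theorems were proved under the same set of standing hypotheses (the regularity and support conditions on $V$ together with $\inf\sigma(L)=0$), so they may be applied simultaneously, and that the inclusion $\sigma_\mathrm{ess}(H)\subseteq\sigma(H)$ is a general fact about self-adjoint operators requiring no argument. Hence the corollary is immediate.
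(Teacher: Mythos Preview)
Your proof is correct and matches the paper's intended argument: the corollary is stated without proof in the paper, as it follows immediately from combining Theorems~\ref{th:critess} and~\ref{th:nonneg} via the chain of inclusions you wrote down.
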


\section{Subcritical regime} \label{s:subrit}
\subsection{Essential spectrum}\setcounter{equation}{0}

In contrast to the critical case, one can now guarantee only that
the perturbation does not make the essential spectrum  shrink.
\begin{theorem}{Theorem} \label{th:subcritess}
Let $\inf\,\sigma(L)>0$ then $\sigma_{\mathrm{ess}}
(H)\supset[\omega, \infty)$.
\end{theorem}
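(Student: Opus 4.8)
The plan is to prove the inclusion $\sigma_{\mathrm{ess}}(H)\supset[\omega,\infty)$ by exhibiting, for every $\mu\ge\omega$, a singular Weyl sequence, exactly as in the proof of Theorem~\ref{th:critess} but now built from the transverse oscillator ground state rather than from the eigenfunction $h$ of $L$. Recall that in the free case $\lambda=0$ the operator is $-\partial_x^2-\partial_y^2+\omega^2 y^2$, whose spectrum is $[\omega,\infty)$, the transverse ground state being the Gaussian $g_0(y)=(\omega/\pi)^{1/4}\e^{-\omega y^2/2}$ with $(-\partial_y^2+\omega^2 y^2)g_0=\omega g_0$. The essential point is that the perturbation $\lambda y^2 V(xy)$ is, in a suitable sense, negligible for test functions supported far out along the $y$-axis only if we localize near the region where $V(xy)$ vanishes — but here, in contrast to the critical construction, we cannot push the support of the trial function toward large $|y|$ where the potential bump $y^2V(xy)$ is concentrated in a thinner and thinner $x$-strip but with growing height. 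So the construction must instead localize the trial function in a region of the $(x,y)$-plane that escapes to infinity while staying away from the support of $V(xy)$, i.e. away from the curve $|xy|\le a$.

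Concretely, I would take
$$
\psi_n(x,y):=g_0(y)\,\e^{i\sqrt{\mu-\omega}\,x}\,\phi\!\left(\frac{x}{n}\right)\eta\!\left(\frac{x}{n}\right),
$$
no — more carefully, since $g_0(y)$ decays only in $y$ and the potential support $\{|xy|\le a\}$ is not avoided by translating in $x$ alone, I would instead translate the trial function in the $y$ direction and keep $x$ in a bounded window where $V$ can still be nonzero; that fails too. The clean choice is to localize simultaneously so that $|x|$ grows: put
$$
\psi_n(x,y):=g_0(y)\,\e^{i k x}\,\chi\!\left(\frac{x}{n}-1\right),\qquad k:=\sqrt{\mu-\omega},
$$
with $\chi$ smooth, $\mathrm{supp}\,\chi\subset[-\tfrac12,\tfrac12]$, $\|\chi\|_2=1$, so that $\mathrm{supp}\,\psi_n$ lives in $|x|\in[\tfrac n2,\tfrac{3n}{2}]$. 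On this support $V(xy)$ is supported only in $|y|\le a/|x|\le 2a/n$, a region where $g_0$ is bounded, so
$$
\lambda\int_{\mathbb{R}^2} y^2 V(xy)\,|\psi_n|^2\,\mathrm{d}x\,\mathrm{d}y
\le \lambda\,\|V\|_\infty\,\|\chi\|_\infty^2\int_{|y|\le 2a/n} y^2\,|g_0(y)|^2\,\mathrm{d}y\;\to\;0,
$$
in fact $O(n^{-3})$. Meanwhile $\|\psi_n\|^2=\|g_0\|_2^2\,n\,\|\chi\|_2^2\cdot(1/n)$ after the obvious rescaling — one normalizes at the end — and the derivative terms produce, besides $(-\partial_y^2+\omega^2 y^2)g_0=\omega g_0$ and $-\partial_x^2 \e^{ikx}=k^2\e^{ikx}=(\mu-\omega)\e^{ikx}$, only cross terms carrying a factor $1/n$ from the differentiated cutoff $\chi'(x/n)$, $\chi''(x/n)$, each of which is $O(n^{-2})$ in squared norm relative to $\|\psi_n\|^2$. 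Hence $\|H\psi_n-\mu\psi_n\|/\|\psi_n\|\to0$, and taking $n=n_j\to\infty$ fast enough the supports become disjoint, giving weak convergence to zero; Weyl's criterion then yields $\mu\in\sigma_{\mathrm{ess}}(H)$.

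The step I expect to be the main obstacle is verifying that the potential contribution $\lambda\int y^2V(xy)|\psi_n|^2$ is genuinely small relative to $\|\psi_n\|^2$: because $y^2V(xy)$ grows quadratically in $y$ while its support shrinks as $|x|$ grows, one must be a little careful to see that on $\mathrm{supp}\,\psi_n$ the relevant $|y|$ is at most $O(1/n)$, so the height $y^2$ is $O(n^{-2})$ there and the whole integral is $O(n^{-2})\cdot O(n^{-1}) = O(n^{-3})$ after accounting for the Gaussian weight — comfortably dominated by $\|\psi_n\|^2 \asymp 1$. A secondary technical point is that $\psi_n\in D(H)$: since $g_0$ and $\chi$ are smooth with the required decay and $\psi_n$ has compact support, $\psi_n\in C_0^\infty(\mathbb{R}^2)\subset D(H)$ (or is at least in the form domain, which suffices for the version of Weyl's criterion via approximate eigenvectors), so this causes no trouble. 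Everything else is the same bookkeeping of $O(n^{-1})$ and $O(n^{-2})$ terms already carried out in the proof of Theorem~\ref{th:critess}.
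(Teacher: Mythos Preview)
Your construction is essentially identical to the paper's: a Weyl sequence built from the oscillator ground state $g_0(y)$, the plane wave $\e^{i\sqrt{\mu-\omega}\,x}$, and a smooth cutoff sliding to $x\to+\infty$, with the potential contribution small because on the support of $\psi_n$ the factor $V(xy)$ vanishes except for $|y|\lesssim a/n$. One small correction: $\psi_n\notin C_0^\infty(\mathbb{R}^2)$ since $g_0$ is a Gaussian rather than compactly supported, but $\psi_n$ is nonetheless in $D(H)$ (it is Schwartz in $y$, compactly supported in $x$, and $H\psi_n\in L^2$), so the argument goes through unchanged.
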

\begin{proof}
As before we are going to construct a Weyl sequence for any number
$\mu\ge\omega$. This time we employ the functions
\begin{equation}
\varphi_k(x, y)=\frac{1}{\sqrt{k}}\,g(y)\, e^{i \sqrt{\mu-\omega}
x}\, \eta\left(\frac{x}{k}\right)\,,
\end{equation}
where $g$ is the normalized eigenfunction associated with the
principal eigenvalue of the harmonic oscillator,
$h_{\mathrm{osc}}=-\frac{\mathrm{d}^2}{\mathrm{d}y^2}+\omega^2 y^2$
on $L^2(\mathbb{R})$, the function $\eta\in C_0^\infty(1, 2)$ is
supposed to satisfy the following condition,
$$
\int_1^2\eta^2(z)\,\mathrm{d}z=1\,,
$$
and $k\in\mathbb{N}$ is a positive integer to be chosen later. Let
us note that $\|\varphi_k\|_{L^2(\mathbb{R}^2)}=1$ because
\begin{equation}
\int_{\mathbb{R}^2} |\varphi_k(x,y)|^2\, \mathrm{d}x\,\mathrm{d}y
=\frac{1}{k}\int_{\mathbb{R}}\left|g(y)\right|^2\,\mathrm{d}y\,
\int_k^{2k}\left|\eta\left(\frac{x}{k}\right)\right|^2\,\mathrm{d}x
=\int_{\mathbb{R}}g^2(y)\,\mathrm{d}y\,\int_1^2\eta^2(z)\,\mathrm{d}z=1\,.
\end{equation}
Our aim is to show that $\|H\varphi_k
-\mu\varphi_k\|_{L^2(\mathbb{R}^2)}^2<\varepsilon$  holds for an
appropriate $k=k(\varepsilon)$. By a straightforward calculation one
gets
 \begin{equation}\label{calculations*}
\frac{\partial^2\varphi_k}{\partial x^2} =
\left(-\frac{(\mu-\omega)}{\sqrt{k}} \,\eta\left(\frac{x}{k}\right)
+\frac{2i \sqrt{\mu-\omega}}{k\sqrt{k}}
\,\eta^\prime\left(\frac{x}{k}\right) +\frac{1}{k^2\sqrt{k}}
\,\eta''\left(\frac{x}{k}\right)\right) g(y)\,\e^{i
\sqrt{\mu-\omega}x}
 \end{equation}
and
$$
\frac{\partial^2\varphi_k}{\partial y^2} =\frac{1}{\sqrt{k}}\,g''(y)
\,\e^{i \sqrt{\mu-\omega} x}\eta\left(\frac{x}{k}\right)\,.
$$
We want to show that choosing $k$ sufficiently large one can make a
part of the terms at the right-hand side of (\ref{calculations*}) as
small as one wishes. Changing the integration variables, we get the
following estimates
\begin{eqnarray*}
\lefteqn{\int_{\mathbb{R}^2}\left|\frac{1}{k\sqrt{k}} \,g(y)\, \e^{i
\sqrt{\mu-\omega}
x}\eta^\prime\left(\frac{x}{k}\right)\right|^2\,\mathrm{d}x\,\mathrm{d}y
=\frac{1}{k^3}\int_{k}^{2k}\int_{\mathbb{R}} \left|g(y)
\,\eta^\prime\left(\frac{x}{k}\right)\right|^2\,\mathrm{d}x\,\mathrm{d}y}
\\ &&
\le\frac{1}{k^3}\int_{k}^{2k}\left|\eta^\prime\left(\frac{x}{k}\right)\right|^2\,\mathrm{d}x
\,\int_{\mathbb{R}}|g(y)|^2\,\mathrm{d}y \le\frac{1}{k^2}
\int_1^2|\eta^\prime(z)|^2\mathrm{d}z\,\int_{\mathbb{R}}|g(y)|^2\,\mathrm{d}y\,,
\end{eqnarray*}
and in the same way we establish the remaining inequality needed to
demonstrate our claim,
\begin{eqnarray*}
\lefteqn{\int_{\mathbb{R}^2}\left|\frac{1}{k^2\sqrt{k}}
\,g(y)\,\e^{i \sqrt{\mu-\omega}
x}\eta''\left(\frac{x}{k}\right)\right|^2\,\mathrm{d}x\,\mathrm{d}y=\frac{1}{k^5}\int_{k}^{2k}\int_{\mathbb{R}}
\left|g(y) \,\eta''\left(\frac{x}{k}\right)\right|^2
\,\mathrm{d}x\,\mathrm{d}y} \\ &&
\le\frac{1}{k^4}\,\,\int_{\mathbb{R}}|g(y)|^2\,\mathrm{d}y\,
\int_1^2|\eta''(z)|^2\mathrm{d}z\,. \phantom{AAAAAAAAAAAAAAAAAAAAA}
\end{eqnarray*}
Consequently, choosing $k$ large enough one can achieve that the
integrals on the left-hand sides of the above inequalities will be
less than $\varepsilon$, which implies
\begin{eqnarray*}
\lefteqn{\int_{\mathbb{R}^2}|H\varphi_k-\mu
\varphi_k|^2(x,y)\,\mathrm{d}x\,\mathrm{d}y} \\ &&
=\int_{\mathbb{R}^2}\left|-\frac{\partial^2\varphi_k}{\partial x^2}-
\frac{\partial^2\varphi_k}{\partial
y^2}+\omega^2y^2\varphi_k-\lambda y^2V(x y)\varphi_k-\mu
\varphi_k\right|^2\, \mathrm{d}x\,\mathrm{d}y
\\  && \le\frac{1}{k}\int_k^{2k}
\int_{\mathbb{R}}\biggl|-g''(y) +(\mu-\omega) g(y) +\omega^2 y^2
g(y)-\lambda y^2 V(x y) g(y)-\mu  g(y) \biggr|^2\, \\ &&
\times \eta\left(\frac{x}{k}\right)\,\mathrm{d}x\,\mathrm{d}y+\varepsilon\,.
\end{eqnarray*}
Using now the fact that $g$ is the ground-state eigenfunction of
$h_{\mathrm{osc}}$ and that the potential $V$ is compactly
supported, the above result implies
\begin{eqnarray*}
\lefteqn{\int_{\mathbb{R}^2}|H\varphi_k-\mu
\varphi_k|^2(x,y)\,\mathrm{d}x\,\mathrm{d}y
\le\frac{\lambda^2}{k}\int_k^{2k}\int_{\mathbb{R}} y^4 V^2(x y)
\,g^2(y)
\,\eta^2\left(\frac{x}{k}\right)\,\mathrm{d}x\,\mathrm{d}y+\varepsilon}
\\ &&
\le\frac{\lambda^2}{k}\int_k^{2k}\int_{-\frac{a}{k}}^{\frac{a}{k}}
y^4 V^2(x y) \,g^2(y)
\,\eta^2\left(\frac{x}{k}\right)\,\mathrm{d}x\,\mathrm{d}y+\varepsilon
\\ &&
\le\frac{a^4 \lambda^2
\|V\|_\infty^2}{k^5}\int_{-\frac{a}{k}}^{\frac{a}{k}}g^2(y)
\,\mathrm{d}y\, \int_k^{2k}
\eta^2\left(\frac{x}{k}\right)\,\mathrm{d}x+\varepsilon
\\ &&
\le\frac{a^4 \lambda^2
\|V\|_\infty^2}{k^4}\,\int_{\mathbb{R}}g^2(y)\,\mathrm{d}y\,
\int_1^2 \eta^2(z)\,\mathrm{d}z+\varepsilon\,, \phantom{AAAAAAAAAAAAAAAA}
\end{eqnarray*}
and consequently, for a large enough $k$ we have
\begin{equation}\label{final}
\int_{\mathbb{R}^2}|H\varphi_k-\mu\varphi_k|^2(x,y)\:
\mathrm{d}x\,\mathrm{d}y<2\varepsilon\,.
\end{equation}
To complete the proof we proceed as in Theorem~\ref{th:critess}
choosing a sequence $\{\varepsilon_j\}_{j=1}^\infty$ such that
$\varepsilon_j\searrow0$ holds as $j\to\infty$ and to any $j$ we
construct a function $\varphi_{k(\varepsilon_j)}$ with the
parameters chosen in such a way that $k(\varepsilon_j) >2
k(\varepsilon_{j-1})$. The norms of $H\varphi_{k(\varepsilon_j)}$
satisfy the inequality (\ref{final}) with $2\varepsilon_j$ on the
right-hand side, and the sequence converges by construction weakly
to zero; this time the elements of the sequence are already
normalized.
\end{proof}

\subsection{Discrete spectrum} \label{s:discr}

Next we are going to show that a subcritical perturbation cannot
inflate the essential spectrum.

\begin{theorem} {Theorem}\label{th:disc}
Let $\inf \sigma(L)>0$, then the spectrum of operator $H$ below
$\omega$ is discrete.
\end{theorem}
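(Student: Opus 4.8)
The claim is equivalent to the bound $\inf\sigma_{\mathrm{ess}}(H)\ge\omega$, which together with Theorem~\ref{th:subcritess} yields $\sigma_{\mathrm{ess}}(H)=[\omega,\infty)$. The plan is to derive this lower bound from the Persson characterization of the bottom of the essential spectrum,
\[
\inf\sigma_{\mathrm{ess}}(H)=\lim_{\rho\to\infty}\,\inf\Big\{\,Q_H[u]/\|u\|^2:\ u\in C_0^\infty\big(\mathbb{R}^2\setminus\overline{B_\rho}\big),\ u\ne 0\,\Big\},
\]
$B_\rho$ being the disc of radius $\rho$ centred at the origin; thus it suffices to show that to every $\varepsilon>0$ there is a $\rho$ such that $Q_H[u]\ge(\omega-\varepsilon)\|u\|^2$ for all $u\in C_0^\infty(\mathbb{R}^2\setminus\overline{B_\rho})$. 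The heuristic behind the choice of $\rho$ is that the potential $y^2V(xy)$ ``lives'' only near the two coordinate axes: close to the $y$-axis it competes with the oscillator, and there the subcriticality $c:=\inf\sigma(L)>0$ will save a definite amount of energy, whereas near the $x$-axis the prefactor $y^2$ makes it negligibly small. Accordingly, fix $\varepsilon>0$, choose $R$ so large that $cR^2\ge\omega$, then $M$ so large that $\lambda\|V\|_\infty a^2M^{-2}+K_0M^{-2}\le\varepsilon$ with $K_0$ the absolute constant appearing below, and finally $\rho>2\sqrt{M^2+R^2}$.

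Given such a $u$, take a partition of unity $\chi_1^2+\chi_2^2=1$ in the variable $x$ with $\chi_1=1$ on $\{|x|\le M\}$, $\mathrm{supp}\,\chi_1\subset\{|x|\le 2M\}$, $\chi_2=1$ on $\{|x|\ge 2M\}$, and $|\chi_1'|^2+|\chi_2'|^2\le K_0M^{-2}$. The IMS localization identity gives
\[
Q_H[u]=Q_H[\chi_1 u]+Q_H[\chi_2 u]-\int_{\mathbb{R}^2}\big(|\chi_1'|^2+|\chi_2'|^2\big)|u|^2\,\mathrm{d}x\,\mathrm{d}y\ \ge\ Q_H[\chi_1 u]+Q_H[\chi_2 u]-\frac{K_0}{M^2}\,\|u\|^2.
\]
By the choice of $\rho$ the function $\chi_1 u$, being supported in $\{|x|\le 2M\}\setminus B_\rho$, is actually supported in $\{|y|>2R\}$; hence for each fixed $y$ the section $x\mapsto(\chi_1 u)(x,y)$ lies in $C_0^\infty(\mathbb{R})$, and performing the same change of variables $w(t,y)=(\chi_1 u)(t/y,y)$ as in the proof of Theorem~\ref{th:nonneg} and using $L\ge c$ we obtain
\[
\int_{\mathbb{R}}|\partial_x(\chi_1 u)|^2\,\mathrm{d}x+\omega^2 y^2\int_{\mathbb{R}}|\chi_1 u|^2\,\mathrm{d}x-\lambda y^2\int_{\mathbb{R}}V(xy)|\chi_1 u|^2\,\mathrm{d}x=|y|\,\langle w,Lw\rangle\ \ge\ c\,y^2\int_{\mathbb{R}}|\chi_1 u(x,y)|^2\,\mathrm{d}x .
\]
Discarding $\int|\partial_y(\chi_1 u)|^2\ge 0$ and using $y^2>R^2\ge\omega/c$ on the support, this integrates to $Q_H[\chi_1 u]\ge\omega\|\chi_1 u\|^2$. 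For the other term, $\chi_2 u$ is supported in $\{|x|\ge M\}$, where $V(xy)\ne 0$ forces $|y|\le a/|x|\le a/M$ and hence $\lambda y^2V(xy)\le\lambda\|V\|_\infty a^2M^{-2}$; combining this with the elementary variational inequality $\int_{\mathbb{R}}|\partial_y v|^2+\omega^2\int_{\mathbb{R}}y^2|v|^2\ge\omega\int_{\mathbb{R}}|v|^2$ applied for each fixed $x$ to $v=(\chi_2 u)(x,\cdot)$, and dropping $\int|\partial_x(\chi_2 u)|^2\ge0$, one gets $Q_H[\chi_2 u]\ge(\omega-\lambda\|V\|_\infty a^2M^{-2})\|\chi_2 u\|^2$.

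Adding the three estimates and using $\|\chi_1 u\|^2+\|\chi_2 u\|^2=\|u\|^2$ yields
\[
Q_H[u]\ \ge\ \Big(\omega-\frac{\lambda\|V\|_\infty a^2}{M^2}-\frac{K_0}{M^2}\Big)\|u\|^2\ \ge\ (\omega-\varepsilon)\|u\|^2
\]
for every $u\in C_0^\infty(\mathbb{R}^2\setminus\overline{B_\rho})$, so that $\inf\sigma_{\mathrm{ess}}(H)\ge\omega-\varepsilon$; since $\varepsilon>0$ was arbitrary, $\inf\sigma_{\mathrm{ess}}(H)\ge\omega$, which is the assertion. The step that needs the most care is the bookkeeping of the cut-off supports: it is precisely because $u$ is localized far from the origin that the ``channel'' piece $\chi_1 u$ is automatically concentrated at large $|y|$, which is what makes the genuine one-dimensional operator $L$ — and not some Neumann truncation of it — available in the key estimate, so that the positivity $c>0$ of the subcritical regime really enters; everything else reduces to the IMS identity and the harmonic-oscillator inequality. (The same conclusion can be reached by Neumann bracketing along $\mathbb{R}^2=\{|y|>R\}\cup\big(\{|x|>M\}\times(-R,R)\big)\cup\big((-M,M)\times(-R,R)\big)$, the last region having compact resolvent; that route additionally uses the standard fact that the ground-state energy of the Neumann realization of $-\frac{\mathrm{d}^2}{\mathrm{d}y^2}+\omega^2y^2$ on $(-R,R)$ tends to $\omega$ as $R\to\infty$, and the bracketing itself is \cite[Sec.~XIII.15]{RS}.)
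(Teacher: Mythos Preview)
Your argument is correct and takes a genuinely different route from the paper's. The paper proves the theorem by Neumann bracketing: it cuts the central strip $\{|x|\le k\}$ into an infinite family of thin boxes $G_{n,k}^{(\pm)}$ at logarithmically spaced heights in~$y$, freezes the potential on each box, and reduces the transverse problem to the Neumann restriction of $L$ on the growing interval $[-k(1+\ln n),\,k(1+\ln n)]$; the external result \cite{BEWZ93} is then invoked to guarantee that these Neumann ground states converge to $\gamma_0=\inf\sigma(L)>0$, so that the box thresholds tend to infinity. Your approach avoids all of this: by pushing the test function far from the origin and cutting only in~$x$, the channel piece $\chi_1u$ automatically sits at large $|y|$, and because its $x$-support is still the whole line after rescaling, the \emph{full-line} operator $L$ (not a Neumann truncation) applies fibre by fibre, giving immediately $Q_H[\chi_1u]\ge c\,y^2\|\chi_1u\|^2\ge\omega\|\chi_1u\|^2$. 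This is both shorter and self-contained, needing neither the logarithmic slicing nor the citation \cite{BEWZ93}. The paper's decomposition does have the advantage that a closely related bracketing is reused in Section~\ref{s:moment} for the eigenvalue-moment bound, so it serves double duty there.

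One technical point deserves a sentence of justification: Persson's formula in its textbook versions (e.g.\ Cycon--Froese--Kirsch--Simon) is stated for Schr\"odinger operators whose negative potential part lies in the Kato class, which fails here because $\lambda y^2V(xy)$ is unbounded along the $y$-axis. The remedy is to check that the form domain of $H$ still embeds in $\mathcal{H}^1(\mathbb{R}^2)$ in the subcritical regime: from the fibre inequality you already use, $\int y^2|u|^2\le c^{-1}Q_H[u]$, whence $\int|\partial_x u|^2\le Q_H[u]+\lambda\|V\|_\infty\int y^2|u|^2\le(1+\lambda\|V\|_\infty/c)\,Q_H[u]$, and $\int|\partial_y u|^2\le Q_H[u]$ is immediate. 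With this in hand, multiplication by a bounded compactly supported function is $H$-form-compact and the Persson inequality you need goes through. Alternatively, your parenthetical Neumann-bracketing variant (split $\mathbb{R}^2$ into $\{|y|>R\}$, two half-strips $\{|x|>M\}\times(-R,R)$, and the rectangle $(-M,M)\times(-R,R)$) sidesteps the issue entirely and is still simpler than the paper's decomposition.
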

\begin{proof}
We employ a Neumann bracketing and the minimax principle
\cite[Secs.~XIII.1 and XIII.15]{RS}. Let us fix a natural number
$k$, later to be chosen large, and let $h^{(\pm)}_{n, k}$ and $h_k$
be the Neumann restrictions of operator $H$ to the regions
$$
G^{(\pm)}_{n, k}=\{x:\:|x|\le k\} \times \left\{y:\: 1+\ln n<\pm
y\le 1+\ln(n+1)\right\}
$$
and
$$
G_k=\{|x|>k\}\times \mathbb{R}\,, \;\; G^{(0)}=[-k,k]\times[-1,1]\,.
$$
We have the inequality
\begin{equation}\label{N}
H\ge\bigoplus_{n=1}^\infty\:
\left(h^{(+)}_{n, k}\oplus h_{n, k}^{(-)} \right)\oplus h_k \oplus h^{(0)}\,.
\end{equation}
Since the spectrum of $h^{(0)}$ is obviously discrete, to prove our claim we first demonstrate that the spectral thresholds of $h_{n, k}^{(\pm)}$ tend for large enough $k$ to infinity as $n\to\infty$, and secondly, that for any $\Lambda<\omega$ one can choose $k$ in such a way that the spectrum of $h_k$ below $\Lambda$ is empty. Since the function $V$ has a bounded derivative and is compactly supported we have
$$
V(x y)-V(x (1+\ln n))=\mathcal{O}\left(\frac{1}{n\ln n}\right)\,,
\quad y^2-(1+\ln n)^2=\mathcal{O}\left(\frac{\ln n}{n}\right)
$$
for any $(x,y) \in G_{n, k}^{(+)}$ and similar relation for
$G^{(-)}_{n, k}$. This yields
$$
y^2V(xy)-(1+\ln n)^2\,V(\pm x(1+\ln n))=\mathcal{O}\left(\frac{\ln
n}{n}\right)
$$
for any $(x,y) \in G_{n, k}^{(\pm)}$, which further implies the
asymptotic inequalities
\begin{equation} \label{l_nk}
\inf\sigma(h_{n, k}^{(\pm)})\ge\inf\sigma(l_{n,
k}^{(\pm)})+\mathcal{O}\left(\frac{\ln n}{n}\right)\,,
\end{equation}
in which the Neumann operators $l_{n,k}^{(\pm)}:=-\frac{\partial^2}{\partial x^2}-\frac{\partial^2}{\partial y^2}+\omega^2(1+\ln n)^2-\lambda (1+\ln
n)^2\,V(\pm x(1+\ln n))$ on $G^{(\pm)}_{n, k}$ have separated variables. Since the principal eigenvalue of $-\frac{\mathrm{d}^2}{\mathrm{d}y^2}$ on an interval with Neumann boundary conditions is zero, we have
\begin{equation}\label{widetildel}
\inf\sigma\left(l^{(\pm)}_{n,k}\right)
=\inf\sigma\big(\widetilde{l}_{n, k}\big)\,,
\end{equation}
where
$$
\widetilde{l}_{n, k} :=-\frac{\mathrm{d}^2}{\mathrm{d}x^2}
+\omega^2\,(1+\ln n)^2-\lambda (1+\ln n)^2\,V(x(1+\ln n))
$$
acts on $L^2(-k, k)$. To proceed with the proof we employ the following result established in \cite{BEWZ93}:
Let $l_k=-\frac{\mathrm{d}^2}{\mathrm{d}x^2}+\omega^2-\lambda V(x)$
be the Neumann restriction of operator $L$ given by
\eqref{comparison} to the interval $[-k, k],\;k>0$, then we have
\begin{equation}\label{hnp1}
\inf\sigma\left(l_k\right)\to\gamma_0\;\quad\text{as}\quad
k\to\infty\,,
\end{equation}
where $\gamma_0:= \inf\sigma(L)$. Further we note that by the change of the variable $x=\frac{t}{1+\ln n}$ the operator $\widetilde{l}_{n, k}$ is unitarily equivalent to $(1+\ln n)^2\,L_{n, k},$ where $L_{n, k}=-\frac{\partial^2}{\partial x^2}+\omega^2-\lambda V$ in $L^2(-k (1+\ln n), \,k (1+\ln n))$ with Neumann conditions at the endpoints of the interval. Then in view of inequalities (\ref{l_nk})-(\ref{widetildel}), the relation $\inf\sigma(\widetilde{l}_{n, k}) = (1+\ln n)^2\, \inf\sigma(L_{n, k})$, and (\ref{hnp1}) we conclude the proof of the discreteness of $\bigoplus_{n=1}^\infty\:\left(h^{(+)}_{n, k}\oplus h_{n, k}^{(-)} \right)$.

It remains to inspect the spectrum of $h_k$. Since $V$ is compactly supported then $V(xy)=0$ if $|x|>k$ and $|y|>\frac{a}{k}$, hence in view of \eqref{H1} we have $h_k=-\Delta+\omega^2 y^2+\mathcal{O}(k^{-2})$, and therefore
\begin{equation}\label{perturb.}
\inf \sigma(h_k)=\inf \sigma(-\Delta+\omega^2 y^2)+\mathcal{O}(k^{-2})
\end{equation}
by an elementary perturbation argument \cite{K95}. Since the operator $-\Delta+\omega^2 y^2$ allows for separation of variables, which shows that its spectrum is $[\omega,\infty)$, in combination (\ref{perturb.}) we arrive at
\begin{equation}\label{omega}
\inf \sigma(h_k)=\omega+\mathcal{O}(k^{-2})\,,
\end{equation}
which concludes the proof of Theorem~\ref{th:disc}.
\end{proof}

\begin{remark}{Remark}
{\rm Let us denote now the operator \eqref{H1} as $H_\lambda$. Since the potential $V$ is non-negative it is easy to see that the relations $\mathrm{dom}(Q_{H_\lambda}) \subset \mathrm{dom}(Q_{H_\mu})$ and  $Q_{H_\mu}\le Q_{H_\lambda}$ hold provided $\lambda\le\mu$, in other words, that we have operator inequality $H_\mu\le H_\lambda$. This allows us to localize better the discrete spectrum.}
\end{remark}

\begin{corollary}{Corollary}
The discrete spectrum of a subcritical operator \eqref{H1} is contained in the interval $[0,\omega)$.
\end{corollary}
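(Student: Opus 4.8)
The plan is to combine Theorems~\ref{th:subcritess} and~\ref{th:disc} with the non-negativity of the critical operator, Theorem~\ref{th:nonneg}, the two being linked by the operator monotonicity recorded in the Remark just above.

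First I would read off what the previous results already yield. Theorem~\ref{th:disc} says that $\sigma(H)\cap(-\infty,\omega)$ consists of isolated eigenvalues of finite multiplicity, while Theorem~\ref{th:subcritess} gives $\sigma_{\mathrm{ess}}(H)\supset[\omega,\infty)$; since $\sigma_{\mathrm{disc}}(H)$ and $\sigma_{\mathrm{ess}}(H)$ are disjoint, this forces $\sigma_{\mathrm{disc}}(H)\subset(-\infty,\omega)$. Thus the only thing still to be shown is that a subcritical operator $H=H_\lambda$ has no negative eigenvalues, i.e. that $H_\lambda\ge0$.

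To prove the latter I would pass to a critical coupling. Writing $\gamma_0(\lambda):=\inf\sigma(L_\lambda)$ with $L_\lambda=-\mathrm{d}^2/\mathrm{d}x^2+\omega^2-\lambda V$, the minimax principle makes $\lambda\mapsto\gamma_0(\lambda)$ continuous and non-increasing, one has $\gamma_0(0)=\omega^2>0$, and testing with a function sharply concentrated at a point where $V$ is positive shows $\gamma_0(\lambda)\to-\infty$ as $\lambda\to\infty$. Hence there exists $\lambda_{\mathrm c}$ with $\gamma_0(\lambda_{\mathrm c})=0$, and since the coupling under consideration is subcritical, $\gamma_0(\lambda)>0=\gamma_0(\lambda_{\mathrm c})$, so monotonicity gives $\lambda\le\lambda_{\mathrm c}$. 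The Remark then provides the operator inequality $H_{\lambda_{\mathrm c}}\le H_\lambda$, whereas Theorem~\ref{th:nonneg} applied at the critical coupling $\lambda_{\mathrm c}$ gives $H_{\lambda_{\mathrm c}}\ge0$; combining the two, $H_\lambda\ge H_{\lambda_{\mathrm c}}\ge0$, so $\sigma(H_\lambda)\subset[0,\infty)$. Together with the previous paragraph this yields $\sigma_{\mathrm{disc}}(H)\subset[0,\omega)$, as claimed.

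The argument is short, and the only place where a little care is called for is the existence of a critical coupling $\lambda_{\mathrm c}\ge\lambda$ above a given subcritical one, that is, the continuity, monotonicity and eventual negativity of $\lambda\mapsto\inf\sigma(L_\lambda)$ — elementary, but worth stating explicitly since it is exactly what makes Theorem~\ref{th:nonneg} applicable here. As an aside, one can even bypass the comparison operator altogether: the proof of Theorem~\ref{th:nonneg} used only the property $L\ge0$, which holds a fortiori when $\inf\sigma(L)>0$, so $H\ge0$ follows directly in the subcritical case as well.
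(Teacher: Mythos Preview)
Your argument is correct and follows essentially the same route as the paper: compare $H_\lambda$ with the critical operator via the monotonicity recorded in the Remark, and invoke Theorem~\ref{th:nonneg} to get $H_\lambda\ge0$. The paper's one-line proof simply asserts $H_\lambda\ge H_{\lambda_{\mathrm{crit}}}$ without justifying that a critical coupling $\lambda_{\mathrm{crit}}\ge\lambda$ exists, so your continuity/monotonicity discussion of $\lambda\mapsto\inf\sigma(L_\lambda)$ is a genuine (if elementary) addition; your closing aside that the proof of Theorem~\ref{th:nonneg} only uses $L\ge0$, hence applies verbatim in the subcritical case, is in fact the cleanest way to close the argument and sidesteps the need for $\lambda_{\mathrm{crit}}$ altogether.
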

\begin{proof}
By the previous remark we have $H_\lambda\ge H_{\lambda_\mathrm{crit}}$ which yields the claim in combination with Theorem~\ref{th:nonneg} and the minimax principle.
\end{proof}

\subsection{Existence of the discrete spectrum}

The above results, on the other hand, tell us nothing about the existence of the discrete spectrum. This is the question we are going to address now.

\begin{theorem}{Theorem}
Let $\inf \sigma (L)>0$, then the discrete spectrum of $H$ is non-empty.
\end{theorem}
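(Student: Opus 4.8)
The plan is to exhibit a single trial function $u$ in the form domain of $H$ with $Q_H[u]<\omega\|u\|^2$. This suffices: by Theorem~\ref{th:disc} the spectrum of $H$ below $\omega$ is discrete, and in the subcritical regime $H\ge H_{\lambda_\mathrm{crit}}\ge 0$ by the remark preceding this statement together with Theorem~\ref{th:nonneg}; hence any point of $\sigma(H)$ lying in $[0,\omega)$ is automatically an isolated eigenvalue of finite multiplicity, and the minimax principle turns the strict inequality $\inf\sigma(H)<\omega$ into the non-emptiness of $\sigma_\mathrm{disc}(H)$.

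For the trial function I would take the tensor product $u(x,y)=f(x)\,g(y)$, where $g$ is the $L^2$-normalized ground state of $h_\mathrm{osc}=-\frac{\mathrm{d}^2}{\mathrm{d}y^2}+\omega^2y^2$ — a Gaussian, hence Schwartz, with $\langle g,h_\mathrm{osc}g\rangle=\omega$ — and $f\in\mathcal H^1(\mathbb R)$ is to be chosen. Using the expression for $Q_H$ from the proof of Theorem~\ref{th:nonneg} and Fubini's theorem one gets $\|\partial_x u\|^2=\|f'\|^2_{L^2(\mathbb R)}$, $\|\partial_y u\|^2+\omega^2\|yu\|^2=\langle g,h_\mathrm{osc}g\rangle\,\|f\|^2=\omega\|f\|^2$, and the interaction term equals $\int_{\mathbb R}W(x)|f(x)|^2\,\mathrm dx$ with the effective potential
$$
W(x):=\lambda\int_{\mathbb R}y^2\,V(xy)\,|g(y)|^2\,\mathrm dy\ge 0 .
$$
Consequently $Q_H[u]-\omega\|u\|^2=\|f'\|^2-\int_{\mathbb R}W|f|^2$, so the whole problem reduces to showing that the one-dimensional Schr\"odinger operator $-\frac{\mathrm d^2}{\mathrm dx^2}-W$ on $L^2(\mathbb R)$ has spectrum below zero.

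Next I would verify that $W$ is a nonnegative, bounded and integrable function which is not identically zero. Boundedness and decay follow from $0\le V\le\|V\|_\infty$ and the Gaussian decay of $g$ — indeed, substituting $s=xy$ gives $W(x)=\lambda|x|^{-3}\int s^2V(s)|g(s/x)|^2\,\mathrm ds\le\lambda\|g\|_\infty^2|x|^{-3}\int s^2V(s)\,\mathrm ds$ for $x\ne 0$, and $W$ is continuous at the origin — while $\int_{\mathbb R}W(x)\,\mathrm dx=\lambda\|V\|_{L^1}\int_{\mathbb R}|y|\,|g(y)|^2\,\mathrm dy\in(0,\infty)$, which also shows $W\not\equiv 0$. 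By the classical weak-coupling result already invoked in this paper (see \cite{Si76}) the operator $-\frac{\mathrm d^2}{\mathrm dx^2}-W$ then possesses a negative eigenvalue; alternatively, one can argue fully explicitly by taking $f_\varepsilon(x)=\mathrm e^{-\varepsilon|x|}\in\mathcal H^1(\mathbb R)$, for which $\|f_\varepsilon'\|^2=\varepsilon$ whereas $\int_{\mathbb R}W|f_\varepsilon|^2\to\int_{\mathbb R}W>0$ as $\varepsilon\to 0^+$, so that $\|f_\varepsilon'\|^2-\int_{\mathbb R}W|f_\varepsilon|^2<0$ for all sufficiently small $\varepsilon$. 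Fixing such an $f$ and setting $u=fg$, one checks that $u\in\mathrm{dom}(Q_H)$: since $g$ is Schwartz and $f\in\mathcal H^1$, the functions $\partial_x u=f'g$, $\partial_y u=fg'$ and $yu=f\,(yg)$ all lie in $L^2(\mathbb R^2)$, and $\int_{\mathbb R^2}y^2V(xy)|u|^2\le\|V\|_\infty\|yu\|^2<\infty$. This yields $Q_H[u]<\omega\|u\|^2$ and completes the argument.

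The only genuinely delicate point is the passage to the effective one-dimensional problem, i.e. confirming that the interaction term really reduces to $\int W|f|^2$ and that $W$ has exactly the integrability and non-triviality needed to apply the bound-state result; the remaining ingredients — separation of variables for $u=fg$, the use of Theorem~\ref{th:disc} to identify the subthreshold spectrum as discrete, and the verification that $u$ belongs to the form domain — are routine.
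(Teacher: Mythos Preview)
Your argument is correct and follows essentially the same route as the paper: a product trial function $f(x)g(y)$ with $g$ the oscillator ground state, then a spread-out choice of $f$ so that the $x$-kinetic term is beaten by the attractive interaction. The only difference is packaging: the paper takes $f=\chi(\cdot/k)$ with $\chi$ smooth and compactly supported and estimates directly, whereas you first identify the effective one-dimensional potential $W$ and then either invoke \cite{Si76} or use $f_\varepsilon=\mathrm e^{-\varepsilon|\cdot|}$ --- a slightly cleaner formulation of the same mechanism.
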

\begin{proof}
In view of Theorem~\ref{th:disc} it is sufficient to construct a normalized trial function $\phi$ such that the corresponding value of the quadratic form $Q_H$ is less than $\omega$. This time we use the letter $h$ to denote the normalized ground-state eigenfunction of the one-dimensional harmonic oscillator governed by $h_{\mathrm{osc}}=-\frac{\mathrm{d}^2}{\mathrm{d}y^2}+\omega^2y^2$ on $L^2(\mathbb{R})$ and set
$$
\phi(x, y):=\frac{1}{\sqrt{k}} h(y)
\chi\left(\frac{x}{k}\right)\,,
$$
where $\chi(z)$ is a real-valued smooth function with $\mathrm{supp} (\chi)=[-1, 1]$ such that
$$
\int_{-1}^1\chi^2(z)\,\mathrm{d}z=1\,,\quad\min_{|z|\le1/2}\,
\chi(z)=:\alpha>0\,,
$$
and $k$ is a natural number to be chosen later. A straightforward computation yields
\begin{eqnarray}\nonumber
\lefteqn{Q_H[\phi]=\int_{\mathbb{R}^2}\left|\frac{\partial{\phi}}{\partial{x}}\right|^2\,
\mathrm{d}x\,\mathrm{d}y
+\int_{\mathbb{R}^2}\left|\frac{\partial{\phi}}{\partial{y}}\right|^2\,
\mathrm{d}x\,\mathrm{d}y +\int_{\mathbb{R}^2}\omega^2 y^2\,
|\phi|^2\,\mathrm{d}x\,\mathrm{d}y} \\ && \nonumber \qquad - \lambda \int_{\mathbb{R}^2}y^2
V(xy)\,|\phi^2|\,\mathrm{d}x\,\mathrm{d}y \\ && \nonumber
=\frac{1}{k^3}\int_{\mathbb{R}^2}h^2(y)\,(\chi^\prime)^2
\left(\frac{x}{k}\right)\,\mathrm{d}x\,\mathrm{d}y
+\frac{1}{k}\int_{\mathbb{R}^2}\left(h^\prime\right)^2(y)\,
\chi^2\left(\frac{x}{k}\right)\,\mathrm{d}x\,\mathrm{d}y
\\  && \nonumber \qquad +\frac{1}{k}\int_{\mathbb{R}^2}\omega^2 y^2
\,h^2(y)\,\chi^2\left(\frac{x}{k}\right)\,\mathrm{d}x\,\mathrm{d}y
-\frac{\lambda}{k}\int_{\mathbb{R}^2}y^2 V(xy)\, h^2(y)\,\chi^2
\left(\frac{x}{k}\right)\,\mathrm{d}x\,\mathrm{d}y
\\ && \nonumber
=\mathcal{O}\left(\frac{1}{k^2}\right)+\frac{1}{k}
\int_{\mathbb{R}^2}\left(\left(h^\prime\right)^2(y)
+\omega^2y^2\,h^2(y)\right)\,\chi^2\left(\frac{x}{k}\right)\,
\mathrm{d}x\,\mathrm{d}y
\\ && \nonumber \qquad -\frac{\lambda}{k}\int_{\mathbb{R}^2}y^2
V(xy)\, h^2(y)\, \chi^2\left(\frac{x}{k}\right)\,
\mathrm{d}x\,\mathrm{d}y
\\ && \nonumber
=\mathcal{O}\left(\frac{1}{k^2}\right)+\frac{\omega}{k}\int_{\mathbb{R}^2}
h^2(y)\,\chi^2\left(\frac{x}{k}\right)\,\mathrm{d}x\,\mathrm{d}y
-\frac{\lambda}{k}\int_{\mathbb{R}^2}y^2 V(xy)\, h^2(y)\,
\chi^2\left(\frac{x}{k}\ \right)\,\mathrm{d}x\,\mathrm{d}y
\\ && \label{non-emptiness}
=\mathcal{O}\left(\frac{1}{k^2}\right)+\omega-\frac{\lambda}{k}\int_{\mathbb{R}^2}y^2
V(xy)\, h^2(y)\,
\chi^2\left(\frac{x}{k}\right)\,\mathrm{d}x\,\mathrm{d}y\,.
\end{eqnarray}
We need to estimate the last term on the right-hand side of (\ref{non-emptiness}). One has
\begin{eqnarray*}
\lefteqn{\frac{\lambda}{k}\int_{\mathbb{R}^2}y^2 V(x y)\, h^2(y)\, \chi^2\left(\frac{x}{k}\right)\,
\mathrm{d}x\,\mathrm{d}y=\frac{\lambda}{k}\int_{-k}^k
\int_{\mathbb{R}} y^2 V(x y)\, h^2(y)\, \chi^2\left(\frac{x}{k}\right)\,\mathrm{d}x\,\mathrm{d}y} \\ &&
\ge\frac{\lambda}{k}\int_{-k/2}^{k/2} \int_0^\infty y^2 V(x y)\, h^2(y)\,
\chi^2\left(\frac{x}{k}\right)\,\mathrm{d}x\,\mathrm{d}y\ge\frac{\alpha^2
\lambda}{k}\int_{-k/2}^{k/2} \int_0^\infty y^2 V(x y)\,
h^2(y)\,\mathrm{d}x\,\mathrm{d}y\\ && =\frac{\alpha^2
\lambda}{k}\int_0^\infty\int_{-k y/2}^{k y/2}y\, V(t)\,
h^2(y)\,\mathrm{d}t\,\mathrm{d}y\ge\frac{\alpha^2
\lambda}{k}\int_1^\infty\int_{-k/2}^{k/2}y V(t)\,
h^2(y)\,\mathrm{d}t\,\mathrm{d}y\\ && \ge\frac{\alpha^2
\lambda}{k}\int_1^\infty y
h^2(y)\,\mathrm{d}y\,\int_{-k/2}^{k/2}V(t)\,\mathrm{d}t\,.
\end{eqnarray*}
If $k$ is large enough then the above estimate implies
$$
\frac{\lambda}{k}\int_{\mathbb{R}^2}y^2 V(x y)\, h^2(y)\,
\chi^2\left(\frac{x}{k}\right)\,\mathrm{d}x\,\mathrm{d}y\ge\frac{\alpha^2
\lambda}{k}\int_1^\infty y
h^2(y)\,\mathrm{d}y\,\int_{-a}^aV(t)\,\mathrm{d}t\,,
$$
hence in combination with (\ref{non-emptiness}) we infer that
$$
Q_H[\phi] \le \mathcal{O}\left(\frac{1}{k^2}\right)+\omega-\frac{\alpha^2
\lambda}{k}\int_1^\infty y
h^2(y)\,\mathrm{d}y\,\int_{-a}^aV(t)\,\mathrm{d}t<\omega\,,
$$
which is what we set out to demonstrate.
\end{proof}

\section{Eigenvalue estimates} \label{s:moment}
\setcounter{equation}{0}
Since the spectrum of $H$ in $[0,\omega)$ is non-empty and consists of the discrete eigenvalues of finite multiplicity one can think about the eigenvalue momentum estimates in the spirit of Lieb and Thirring \cite{LT76}. To state our result we need the following definition:

\smallskip

\noindent Let $\inf \sigma (L)=\gamma_0>0$ and  let $l_k$ be the Neumann restriction of $L$ to the interval  $[-k, k],\;k>0$. We denote
 \begin{equation}\label{kappa}
 \kappa:=\min \left\{k: \,\inf \sigma (l_k)\ge\gamma_0/2\right\}
 \end{equation}
and observe that (\ref{hnp1}) guarantees the existence of such a number.

\smallskip

\noindent Then we can make the following claim:
\begin{theorem}{Theorem}
Let $\inf \sigma (L)=\gamma_0>0$, then for any $\sigma>\frac{1}{2}$ the inequality
\begin{eqnarray*}
\mathrm{tr} (\omega-H)_+^\sigma \le2\lambda^{2\sigma} \|V\|_\infty^{2\sigma}
a^{4\sigma}\sum_{n=1}^\infty\frac{1}{\alpha_1^{2\sigma}\left(\sqrt{\lambda
\|V\|_\infty} a+(n-1) \pi\right)^{2\sigma}}\,\\
+\, \left(\frac{2\alpha_1 \sqrt{\omega+\lambda \alpha_1^2 \|V\|_\infty}}{\pi}+1\right)^2 \left(\omega+\lambda \alpha_1^2 \|V\|_\infty\right)^\sigma
\end{eqnarray*}
holds, where
\begin{equation}\label{alpha.}
\alpha_1:=\max\left\{\sqrt{\kappa},\,\frac{2\omega}{\gamma_0},\,\frac{\sqrt{\lambda \|V\|_\infty} a}{\sqrt{2 \omega}}\right\}
\end{equation}
with $\kappa$ defined by (\ref{kappa}).
\end{theorem}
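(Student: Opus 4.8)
The plan is to estimate $\mathrm{tr}\,(\omega-H)_+^\sigma$ by a Neumann bracketing into a central square, two vertical ``tabs'' sitting above and below it, and an outgoing chain of vertical strips, and then to count, piece by piece, the eigenvalues that fall below $\omega$. Concretely, I would split $\mathbb{R}^2$ into the regions
\begin{gather*}
\Pi_0:=[-\alpha_1,\alpha_1]^2,\qquad T^{(\pm)}:=[-\alpha_1,\alpha_1]\times\{y:\,\pm y>\alpha_1\},\\
\Pi^{(\pm)}_n:=\{x:\,\pm x\in(b_n,b_{n+1})\}\times\mathbb{R}\quad(n\ge1),
\end{gather*}
where
\[
b_n:=\alpha_1\left(1+\frac{(n-1)\,\pi}{\sqrt{\lambda\|V\|_\infty}\,a}\right),
\]
so that $b_1=\alpha_1$, all intervals $(b_n,b_{n+1})$ have the common length $\ell:=\pi\alpha_1/(\sqrt{\lambda\|V\|_\infty}\,a)$, and the strips exhaust $\{|x|>\alpha_1\}$. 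Denoting by $h_0,h^{(\pm)}_T,h^{(\pm)}_n$ the Neumann restrictions of $H$ to these regions, one has $H\ge h_0\oplus h^{(+)}_T\oplus h^{(-)}_T\oplus\bigoplus_{n\ge1}\big(h^{(+)}_n\oplus h^{(-)}_n\big)$; since $X\le Y$ entails $\mu_j\big((\omega-Y)_+\big)\le\mu_j\big((\omega-X)_+\big)$ for the ordered eigenvalues, this yields
\[
\mathrm{tr}\,(\omega-H)_+^\sigma\le\mathrm{tr}\,(\omega-h_0)_+^\sigma+\sum_{\pm}\mathrm{tr}\,(\omega-h^{(\pm)}_T)_+^\sigma+\sum_{n\ge1}\sum_{\pm}\mathrm{tr}\,(\omega-h^{(\pm)}_n)_+^\sigma,
\]
and it remains to bound the three groups of terms; the second term of the claimed estimate will come from $h_0$ and the first from the strips, while the tabs will be shown to contribute nothing.

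For the square, drop $\omega^2y^2\ge0$ and use $\lambda y^2V(xy)\le\lambda\alpha_1^2\|V\|_\infty$ on $\Pi_0$ to get $h_0\ge-\Delta_{\mathrm N}-\lambda\alpha_1^2\|V\|_\infty$, with $-\Delta_{\mathrm N}$ the Neumann Laplacian on $\Pi_0$, whence $(\omega-h_0)_+\le\big(E_0-(-\Delta_{\mathrm N})\big)_+$ with $E_0:=\omega+\lambda\alpha_1^2\|V\|_\infty$. Since $-\Delta_{\mathrm N}$ has the eigenvalues $(\pi/2\alpha_1)^2(j^2+m^2)$, $j,m\ge0$, at most $(2\alpha_1\sqrt{E_0}/\pi+1)^2$ of them lie below $E_0$, and the corresponding eigenvalues of $(E_0-(-\Delta_{\mathrm N}))_+$ do not exceed $E_0$; hence $\mathrm{tr}\,(\omega-h_0)_+^\sigma\le(2\alpha_1\sqrt{E_0}/\pi+1)^2E_0^\sigma$, which is exactly the second term of the bound. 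For the tabs, discard $-\partial_y^2\ge0$; for fixed $y$ with $\pm y>\alpha_1$ the operator $-\partial_x^2+\omega^2y^2-\lambda y^2V(xy)$ on $[-\alpha_1,\alpha_1]$ with Neumann endpoints is, by the scaling $x=z/|y|$ of \eqref{appearance}, unitarily equivalent to $y^2\,l_{\alpha_1|y|}$, where $l_k$ is the Neumann restriction of $L$ to $[-k,k]$. Because $\alpha_1|y|>\alpha_1^2\ge\kappa$, the limit \eqref{hnp1} together with the choice \eqref{kappa} of $\kappa$ gives $\inf\sigma(l_{\alpha_1|y|})\ge\gamma_0/2$, so that $h^{(\pm)}_T\ge\tfrac12\gamma_0 y^2\ge\tfrac12\gamma_0\alpha_1^2\ge\omega$, the last inequality being precisely what the requirements $\alpha_1\ge\sqrt{\kappa}$ and $\alpha_1\ge2\omega/\gamma_0$ in \eqref{alpha.} are designed to secure; consequently $(\omega-h^{(\pm)}_T)_+=0$.

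For a strip $\Pi^{(\pm)}_n$, the point is that $V(xy)\ne0$ with $\pm x\in(b_n,b_{n+1})$ forces $|y|\le a/|x|\le a/b_n$, so $\lambda y^2V(xy)\le\lambda a^2\|V\|_\infty b_n^{-2}$ everywhere on $\Pi^{(\pm)}_n$; therefore $h^{(\pm)}_n\ge-\Delta_{\mathrm N}+\omega^2y^2-\lambda a^2\|V\|_\infty b_n^{-2}$ on $(b_n,b_{n+1})\times\mathbb{R}$, an operator with separated variables --- the $y$-part is the full-line oscillator $-\partial_y^2+\omega^2y^2$ with spectrum $\{(2m+1)\omega:m\ge0\}$, the $x$-part the Neumann Laplacian on an interval of length $\ell$ with eigenvalues $(j\pi/\ell)^2=j^2\lambda\|V\|_\infty a^2\alpha_1^{-2}$, $j\ge0$. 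Since $b_n\ge\alpha_1$ and, by the third entry of \eqref{alpha.}, $\lambda a^2\|V\|_\infty b_n^{-2}\le\lambda a^2\|V\|_\infty\alpha_1^{-2}\le2\omega$, the only joint eigenvalue below $\omega$ is the one with $j=m=0$, equal to $\omega-\lambda a^2\|V\|_\infty b_n^{-2}$; thus $\mathrm{tr}\,(\omega-h^{(\pm)}_n)_+^\sigma\le\big(\lambda a^2\|V\|_\infty b_n^{-2}\big)^\sigma$. Inserting the value of $b_n$, summing over $n\ge1$ and over the two signs, and observing that the resulting series converges precisely because $2\sigma>1$, reproduces the first term; adding the three contributions completes the proof.

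The delicate point --- where every hypothesis is used --- is the matching between the strip geometry and the one-dimensional input \eqref{hnp1}: one must arrange that every $\Pi^{(\pm)}_n$ carries at most one eigenvalue below $\omega$, governed by the constant $\lambda a^2\|V\|_\infty b_n^{-2}$, which pins down simultaneously the starting point $b_1=\alpha_1$ and the common width $\ell$, while keeping the (in $y$) unbounded tab operators above $\omega$ through the reduction to $l_{\alpha_1|y|}$. The three entries of $\alpha_1$ in \eqref{alpha.} are tailored to this: $\sqrt{\kappa}$ feeds \eqref{hnp1} in the tab estimate, $2\omega/\gamma_0$ lifts the tabs above $\omega$, and $\sqrt{\lambda\|V\|_\infty}\,a/\sqrt{2\omega}$ both keeps each per-strip eigenvalue $\le2\omega$ and makes $\ell$ a small enough fraction of $b_1$ so that no second eigenvalue of the strip operator drops below $\omega$.
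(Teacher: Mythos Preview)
Your proof is correct and follows essentially the same Neumann-bracketing strategy as the paper: a central square $(-\alpha_1,\alpha_1)^2$, vertical strips $\Pi_n^{(\pm)}$ with the same starting point and common width (the paper's $\alpha_n$ coincide with your $b_n$), and the region $\{|x|\le\alpha_1,\ |y|>\alpha_1\}$ handled by dropping $-\partial_y^2$ and using the $x$-scaling to the Neumann operator $l_{\alpha_1|y|}$. The only cosmetic difference is that the paper further slices your tabs $T^{(\pm)}$ into countably many bounded horizontal slabs before applying exactly the same scaling argument, a subdivision your version shows to be unnecessary.
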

\begin{proof}
We are going to employ a bracketing argument similar to that used in Subsection~\ref{s:discr} imposing additional Neumann conditions at the boundaries of the regions
$$
G^{(\pm)}_n=\left\{-\alpha_1<x<\alpha_1\right\}\times \left\{\alpha_n< \pm y<\alpha_{n+1}\right\}\,,
$$
$$
Q_n^{(\pm)}=\left\{\alpha_n<\pm x<\alpha_{n+1}\right\}\times \mathbb{R}\,,
$$
and
$$
G_0=(-\alpha_1, \alpha_1)^2\,,
$$
where $\{\alpha_n\}_{n=1}^\infty$ is a monotonically increasing sequence such that $\alpha_n\to\infty$. Let $h_n^{(\pm)}\,, \widetilde{h}_n^{(\pm)}$, and $h_0$ be the Neumann restrictions of the operator $H$ to the regions $G_n^{(\pm)}\,,Q_n^{(\pm)}$ and $G_0$, respectively. Then we have the inequality
\begin{equation}\label{N1}
H\ge\bigoplus_{n=1}^\infty\:\left(h_n^{(+)}\oplus h_n^{(-)} \right) \,\oplus\,
\bigoplus_{n=1}^\infty\left(\widetilde{h}_n^{(+)}\oplus\widetilde{h}_n^{(-)} \right)\,\oplus h_0\,,
\end{equation}
and consequently, $\mathrm{tr}(\omega-H)_+^\sigma$ can be estimated from above by the sum of the corresponding traces of three components of the right-hand side of (\ref{N1}), or their lower bounds. We begin with $h_n^{(+)}$, where the corresponding quadratic form $Q_{h_n^{(+)}}$ estimates as
\begin{eqnarray}\nonumber
\lefteqn{Q_{h_n^{(+)}}[u]=\int_{-\alpha_1}^{\alpha_1}
\int_{\alpha_n}^{\alpha_{n+1}}\left|\frac{\partial{u}}{\partial{x}}\right|^2\,\mathrm{d}x\,\mathrm{d}y+\int_{-\alpha_1}^{\alpha_1}
\int_{\alpha_n}^{\alpha_{n+1}}\left|\frac{\partial{u}}{\partial{y}}\right|^2\,\mathrm{d}x\,\mathrm{d}y} \\ && \nonumber \qquad +\omega^2\int_{-\alpha_1}^{\alpha_1}
\int_{\alpha_n}^{\alpha_{n+1}}y^2
|u|^2\,\mathrm{d}x\,\mathrm{d}y -\lambda
\int_{-\alpha_1}^{\alpha_1} \int_{\alpha_n}^{\alpha_{n+1}}y^2 V(x y)
|u|^2\,\mathrm{d}x\,\mathrm{d}y\\ && \label{Q1}\ge\int_{\alpha_n}^{\alpha_{n+1}}
\biggl(\int_{-\alpha_1}^{\alpha_1}\left|\frac{\partial{u}}{\partial{x}}\right|^2\,\mathrm{d}x+\omega^2\int_{-\alpha_1}^{\alpha_1}y^2
|u|^2\,\mathrm{d}x - \lambda
\int_{-\alpha_1}^{\alpha_1}y^2 V(x y)
|u|^2\,\mathrm{d}x\biggr)\,\mathrm{d}y \phantom{AA}
\end{eqnarray}
for any $u$ from its domain by neglecting the second term in the first expression. For any fixed $y\neq0$ we change of the variables in the inner
integral on the right-hand side of (\ref{Q1}) and denote by $w(t, y)=u\left(\frac{t}{y}, y\right)$. By choosing $\alpha_1=\sqrt{\kappa}$, where $\kappa$ is given by (\ref{kappa}) we arrive at the relation
\begin{eqnarray*}
\lefteqn{\int_{-\alpha_1}^{\alpha_1}\left|\frac{\partial{u}}{\partial{x}}\right|^2(x, y)\,\mathrm{d}y+\omega^2\, y^2\int_{-\alpha_1}^{\alpha_1} |u|^2(x, y)\,\mathrm{d}x-\lambda y^2\int_{-\alpha_1}^{\alpha_1} V(x y) |u|(x, y)^2\,\mathrm{d}x} \\ && =\frac{1}{y}\Big(y^2\int_{-y \alpha_1}^{y \alpha_1}\left|\frac{\partial{w}}{\partial{t}}\right|^2(t, y)\,\mathrm{d}t+\omega^2\,y^2 \int_{-y \alpha_1}^{y \alpha_1} |w|^2(t, y)\,\mathrm{d}t \\ && \qquad -\lambda y^2\int_{-y \alpha_1}^{y \alpha_1}V(t) |w|^2(t, y)\,\mathrm{d}t\Big)\ge \frac{y \gamma_0}{2}\,. \phantom{AAAAAAAAAAAAAAA}
\end{eqnarray*}
This inequality together with (\ref{Q1}) imply that if
$$
\alpha_1=\max\left\{\frac{2\omega}{\gamma_0}\,,\sqrt{\kappa}\right\}\,,
$$
the operators $h_n^{(+)}\,,n=1, 2, \ldots$, have an empty  spectrum below $\omega$, and the same is \emph{mutatis mutandis} true for ${h}_n^{(-)},\,n=1,2,\ldots$.

Let us next pass to the operators $\widetilde{h}_n^{(\pm)},\,n=1, 2, \ldots$. Since the potential $V$ is compactly supported by assumption we have the estimate
\begin{equation}\label{h_n}
\widetilde{h}_n^{(+)}\ge-\Delta+\omega^2 y^2-\frac{\lambda \|V\|_\infty\, a^2}{\alpha_n^2}\,.
\end{equation}
Since the right-hand side of (\ref{h_n}) allows for separation of variables, the spectrum of $h_n^{(+)}$ is the ``sum'' of the spectra of the one-dimensional Neumann operator $-\frac{\mathrm{d}^2}{\mathrm{d}x^2}$ on the interval $(\alpha_n, \alpha_{n+1})$ and the operator $-\frac{\mathrm{d}^2}{\mathrm{d}y^2}+\omega^2 y^2-\frac{\lambda \|V\|_\infty a^2}{\alpha_n^2}$ on $L^2(\mathbb{R})$. Consider first the latter.
Under the assumption
\begin{equation}\label{V}
\frac{\lambda \|V\|_\infty\, a^2}{\alpha_1^2}\le2\omega
\end{equation}
this operator has no more than one eigenvalue below $\omega$  and
\begin{equation}\label{est.alpha}
\left|\omega-\lambda_1^n\right|\le\frac{\lambda \|V\|_\infty\, a^2}{\alpha_n^2}
\end{equation}
holds, where  $\lambda_1^n$ is the indicated eigenvalue.

The spectrum of the one-dimensional Neumann Laplacian on interval $(\alpha_n, \alpha_{n+1})$ consists of simple eigenvalues, $\left\{\frac{\pi^2 j^2}{(\alpha_{n+1}-\alpha_n)^2}\right\}_{j=0}^\infty$, and by choosing
\begin{equation}\label{alpha_n+1}
\alpha_{n+1}-\alpha_n\le\frac{\pi \alpha_1}{\sqrt{\lambda \|V\|_\infty}\, a}
\end{equation}
one can achieve that all the eigenvalues except the one with $j=0$ are not less than $\omega$. Hence we obtain in view of (\ref{est.alpha}) the following estimates,
$$
\mathrm{tr}\left(\omega-\widetilde{h}_n^{(+)}\right)_+^\sigma\le\frac{\lambda^\sigma \|V\|_\infty^\sigma\, a^{2\sigma}}{\alpha_n^{2\sigma}}\,, \quad n=1, 2,\ldots\,,
$$
and
$$
\mathrm{tr}\left(\omega-\bigoplus_{n=1}^\infty \widetilde{h}_n^{(+)}\right)_+^\sigma\le\lambda^{\sigma}\|V\|_\infty^\sigma a^{2\sigma}\sum_{n=1}^\infty \frac{1}{\alpha_n^{2\sigma}}
$$
for any $\sigma\ge0$. Next we are going to minimize the right-hand side of the last inequality. In view of assumptions (\ref{V}) and (\ref{alpha_n+1}) we may choose $\alpha_{n+1}=\alpha_n+\frac{\pi \alpha_1}{\sqrt{\lambda \|V\|_\infty}\, a}=\alpha_1+\frac{n \pi \alpha_1}{\sqrt{\lambda \|V\|_\infty}\, a}$ starting from some $\alpha_1\ge\frac{\sqrt{\lambda \|V\|_\infty}\, a}{\sqrt{2 \omega}}$ which means that
\begin{equation}\label{h_n.estimate}
\mathrm{tr}\left(\omega-\bigoplus_{n=1}^\infty \widetilde{h}_n^{(+)}\right)_+^\sigma\le   \lambda^{2\sigma} \|V\|_\infty^{2\sigma} a^{4\sigma}\sum_{n=1}^\infty\frac{1}{\alpha_1^{2\sigma}\left(\sqrt{\lambda \|V\|_\infty} a+(n-1) \pi\right)^{2\sigma}}\,.
\end{equation}
In the same way one can establish the estimate for operators $\widetilde{h}_n^{(-)}\,,n=1, 2, \ldots$,
\begin{equation}\label{tildeh_n.estimate}
\mathrm{tr}\left(\omega-\bigoplus_{n=1}^\infty \widetilde{h}_n^{(-)}\right)_+^\sigma\le\lambda^{2\sigma} \|V\|_\infty^{2\sigma} a^{4\sigma}\sum_{n=1}^\infty\frac{1}{\alpha_1^{2\sigma}\left(\sqrt{\lambda \|V\|_\infty} a+(n-1) \pi\right)^{2\sigma}}\,.
\end{equation}
Finally, the operator $h_0$ can be estimated from below by the Neumann operator
$$
\widetilde{h}_0=-\Delta-\lambda \alpha_1^2 \|V\|_\infty\quad\text{on}\quad (-\alpha_1, \alpha_1)^2
$$
the spectrum of which is simple and given by $\left\{\frac{\pi^2 (j^2+q^2)}{4\alpha_1^2}-\lambda \alpha_1^2 \|V\|_\infty\right\}_{j, q=0}^\infty$. Consequently,
\begin{equation}\label{LT.ineq}
\mathrm{tr}(\omega-h_0)_+^\sigma\le\mathrm{tr}\left(\omega-\widetilde{h}_0\right)_+^\sigma\le\left(\omega+\lambda \alpha_1^2 \|V\|_\infty\right)^\sigma \left(\frac{2\alpha_1 \sqrt{\omega+\lambda \alpha_1^2 \|V\|_\infty}}{\pi}+1\right)^2,\quad  \sigma>\frac12\,.
\end{equation}
Choosing now  $\alpha_1$ according to (\ref{alpha.}) and using the estimates (\ref{h_n.estimate})-(\ref{LT.ineq}) in combination with the fact that the operators $h_n^{(\pm)}$ have empty spectrum below $\omega$, we conclude the proof of the theorem.\end{proof}

\end{document}